\documentclass{article}
\usepackage[affil-it]{authblk}

\usepackage[english]{babel} %
\usepackage[T1]{fontenc} %
\usepackage[utf8x]{inputenc} %
\usepackage{a4wide} %
\usepackage{palatino} %

\let\bfseriesbis=\bfseries \def\bfseries{\sffamily\bfseriesbis}

%% Sets page size and margins
\usepackage[a4paper,top=2cm,bottom=2cm,left=2cm,right=2cm,marginparwidth=1.75cm]{geometry}

%% Useful packages
\usepackage{amsmath,amssymb,mathabx} % extra math symbols
\usepackage{amsthm} % theorems
\theoremstyle{plain}
\newtheorem{theoreme}{Proposition}
\newtheorem{lemma}{Lemma}
\usepackage{graphicx,tikz,tikz-qtree}
\usetikzlibrary{shapes.geometric} % ellipses
\usepackage{enumerate}
\usepackage{todonotes}
\usepackage{linguex} % linguistic supplies
\usepackage{avm} % attribute-value matrix
\usepackage[algoruled,onelanguage,linesnumbered]{algorithm2e} % algorithms
\usepackage{multicol}
\usepackage{hyperref}
\hypersetup{
colorlinks=true, %colorise les liens
breaklinks=true, %permet le retour à la ligne dans les liens trop longs
urlcolor= blue, %couleur des hyperliens
linkcolor= black,	%couleur des liens internes
citecolor=black	%couleur des références
}

%% Abbreviations
\newenvironment{point}[1]%
{\subsection*{#1}}%
{}
\newcommand{\fsbase}[1]{{\scriptsize\normalfont #1}}
% \newcommand{\svar}[1]
%   {\setbox2=\hbox{\scalebox{.5}{\rm #1}}\lower.2ex\vbox{\hrule
%      \hbox{\vrule\kern1.25pt 
%      \vbox{\kern1.25pt\box2\kern1.25pt}\kern1.25pt\vrule}\hrule}}
\newcommand{\feat}[1]{\textsc{\lowercase{#1}}}
\newcommand{\type}[1]{\text{\textit{#1}}}
\newcommand{\rel}[1]{\text{\textit{#1}}}
\newcommand{\trieq}{\triangleq}
\newcommand{\iseq}{\vDash\!\!\!\!\Dashv}
\newcommand{\eqdef}{\overset{\mathsf{\small{def}}}{=}}
\newcommand{\comp}{\complement\!\cup\!\W}
\newcommand{\WS}{\hat{\W}}
\newcommand{\M}{\mathcal{M}}
\newcommand{\X}{\mathcal{X}}
\renewcommand{\I}{\mathcal{I}}
\newcommand{\W}{\mathcal{W}}

\newcommand{\U}{\mathcal{U}}
\newcommand{\gN}{g_{\mathsf{N}}}
\newcommand{\gW}{g_{\mathsf{W}}}

\newcommand{\svar}{\sigma_{\mathsf{var}}}
\newcommand{\hV}{h_V}
\newcommand{\hW}{h_{\W}}
\newcommand{\hvar}{h_{\mathsf{var}}}

\newcommand{\bl}{\text{\raisebox{0.3\height}{\scalebox{.7}{\ $\blackdiamond$\ }}}} % positionning of the symbol
\newcommand{\gray}[1]{\textcolor{gray}{#1}}
% feature structure
\newenvironment{fs}[1]{\hspace*{\fill}
\begin{tikzpicture}[
x=.22ex, y=.2ex, >=latex, semithick, scale=1, every node/.style={scale=1},
inner sep=.3ex, outer sep=.5ex, minimum size=1.2ex, label distance=-0.8ex]
\tikzset{ft/.style={draw,ellipse,double}};
\tikzset{nod/.style={draw,circle}};
\tikzset{hom/.style={gray,dotted}};
\tikzset{adj/.style={gray,dashed,->}};
#1
}{\end{tikzpicture}\hspace*{\fill}}
%% The set of proposition , nominals, and state variables
\newcommand\REL{\mathsf{Rel}}
\newcommand\NLABEL{\mathsf{Nlabel}}
\newcommand\NVAR{\mathsf{Nvar}}
\newcommand\NNAME{\mathsf{Nname}}
\newcommand\WVAR{\mathsf{Wvar}}
\newcommand\LABEL{\mathsf{Label}}
\newcommand\TYPE{\mathsf{Type}}
\newcommand\ATTR{\mathsf{Attr}}
\newcommand\AVD{\mathsf{AVD}}
\newcommand\AVF{\mathsf{AVF}}
\newcommand\WR{\mathsf{wr}}
\newcommand\INWR{\mathsf{inwr}}
\newcommand\IN{\mathsf{in}}
\newcommand\FACTS{\mathsf{Facts}}
\newcommand{\dom}{\mathsf{dom}}
\newcommand{\codom}{\mathsf{codom}}
\newcommand{\FORM}{\mathsf{form}}
\newcommand{\ATOMS}{\mathsf{ATOMS}}
\newcommand{\mea}[1]{|#1|_{\ATTR}}

\setlength{\parskip}{0.3\baselineskip}

\renewcommand{\phi}{\varphi}
\renewcommand{\epsilon}{\varepsilon}

% Labelling symbols

\setcounter{tocdepth}{3} % table of content depth

\begin{document}

\title{Introduction of Quantification in Frame Semantics}

\author{Valentin D. Richard\\
{\tt valentin.richard@ens-paris-saclay.fr}\\
under the supervision of Laura Kallmeyer and Rainer Osswald\\ DFG Collaborative Research Centre 991\\ Heinrich-Heine Universit\"at, D\"usseldorf (Germany)}

\date{from February 25, to August 16, 2019}

\affil{M1 internship report\\ENS Paris-Saclay, computer science department}

\maketitle

\begin{abstract}
Feature Structures (FSs) are a widespread tool used for decompositional frameworks of Attribute-Value associations. Even though they thrive in simple systems, they lack a way of representing higher-order entities and relations. This is however needed in Frame Semantics, where semantic dependencies should be able to connect groups of individuals and their properties, especially to model quantification. To answer this issue, this master report introduces wrappings as a way to envelop a sub-FS and treat it as a node. Following the work of [Kallmeyer, Osswald 2013], we extend its syntax, semantics and some properties (translation to FOL, subsumption, unification). We can then expand the proposed pipeline. Lexical minimal model sets are generated from formulas. They unify by FS value equations obtained by LTAG parsing to an underspecified sentence representation. The syntactic approach of quantifiers allows us to use existing methods to produce any possible reading. Finally, we give a transcription to type-logical formulas to interact with the context in the view of dynamic semantics. Supported by ideas of Frame Types, this system provides a workable and tractable tool for higher-order relations with FS.
\end{abstract}

\textbf{Keywords:} feature structure, semantic frame, quantification, attribute-value logic, minimal model, syntax-semantics interface

\setcounter{page}{-1}
%%%%%
\section*{Summary sheet}
\begin{point}{General context}
  
%   De quoi s'agit-il? D'o? vient la question? Quels sont les travaux
%   d?j? accomplis dans ce domaine dans le monde?

My internship falls within the domain of computational linguistics, the interdisciplinary field aiming at modelling natural language with computational tools. More precisely the parts of linguistics that I considered were semantics and the syntax-semantics interface, i.e. the way that meanings of words compose with respect to the structure of a sentence to form a sentential meaning. I followed a theoretical approach, using mainly logic and formal grammar methods to model the studied phenomenon: quantification.

Quantification is a feature of natural language raised among others by determiners (every, some, a/an, no, the, most,...). Since \cite{BarwiseCooper:81} the principal mechanisms of this phenomenon are well understood and well connected to formal logic. Their approach uses the widespread semantics à la Montague \cite{Montague:73}, i.e. predicate-logical (typed $\lambda$-calculus) formulas to represent the meanings of sentences.

However, recent studies advocate for other modellings of meaning. Semantic frames, dealing with the concept of mental representation, are an example of these and are namely supported by philosophical and cognitive results \cite{Barsalou:92,Loebner:14}. Formal specifications \cite{KallmeyerOsswald:13} have already been proved able at connecting this formalism to other famous ones, likes Tree-Adjoining Grammars for syntax.

\end{point}

\begin{point}{The studied problem}
  
%   Quelle est la question que vous avez r?solue? Pourquoi est-elle
%   importante, ? quoi cela sert-il d'y r?pondre?  Pourquoi ?tes-vous
%   le premier chercheur de l'univers ? l'avoir pos?e?

The question is: how to model the phenomenon of quantification in semantic frames? Previous attempts \cite{Hegner:93,KallmeyerOsswaldPogodalla:17} tried to answer this with external tools acting on it. Even if such peripheral methods seem useful and relevant from a computer science viewpoint, the main interest in giving quantifier a full-fledged place in frames is also to provide a cognitive idea of them. In other words: how do human beings represent quantification in their minds?

The DFG Collaborative Research Centre 991 is one of the most interested in this question because their members are studying mental representation models of natural language meaning. Namely, my supervisors Laura Kallmeyer and Rainer Osswald were the most likely to carry out a research project on that topic thanks to their career specialized not only in semantics but also in the syntax-semantics interface and logical formalisms. A previous analysis \cite{KallmeyerRichter:14} do not make clear distinctions between some representation levels. It shows that the subject can be delicate, especially if we look for elegance additionally to convenience and computability.

Finally, it must be borne in mind that the study of quantification is a milestone in the development of a formal NLP (natural language processing) system, as almost every sentence contains determiners. The TreeGraSP project related to the institute is currently designing a program to support the theoretical work. Therefore, they need an implementable and robust way to cope with quantification.

\end{point}

\begin{point}{The proposed contribution}

%   Qu'avez vous propos? comme solution ? cette question? Attention, pas
%   de technique, seulement les grandes id?es! Soignez particuli?rement
%   la description de la d?marche \emph{scientifique}.

My contribution is an extension of Feature Structures, the formal system in which frames are defined, called Wrappings. This new tool allows us to deal with Noun Phrase quantified sentences. The system is developed in a close idea of \cite{KallmeyerRichter:14}. Firstly, the syntactic structure is captured by a grammar, where each word is coupled with a logical specification transformed into a minimal model. The models are then unified with respect to the parsing, forming so-called Quantified Complexes. This complexes can then interact with the current knowledge of the world by verifying or updating it with this new information.

The main representation was inspired by the idea of Frame Types as in \cite{BaloghOsswald:19} and the surrounding structure was constructed to fit this goal, with the help of existing tools (e.g. \cite{KollerThater:05}).
 
\end{point}

\begin{point}{Arguments in favor of its validity}

%   Qu'est-ce qui montre que cette solution est une bonne solution? Des
%   exp?riences, des corollaires? Commentez la \emph{stabilit?} de votre
%   proposition: comment la validit? de la solution d?pend-elle des
%   hypoth?ses de travail?

The proposed model enjoys a simple and intuitive representation of quantified sentences. Moreover, the basic structure of quantifiers, namely scope constraints, is rendered easily and allowing underspecification. This last feature is cognitively very relevant, for we often do not know which particular reading of a sentence was meant. Proofs also back up the correct treatment of the involved models, from their generations to their dynamic processes.

However, it is still not clear if this method is sufficient for every type of quantification, namely for adverbial quantification or attitude verbs.

\end{point}

\begin{point}{Assessment and perspectives}
  
%   Et apr?s? En quoi votre approche est-elle g?n?rale? Qu'est-ce que
%   votre contribution a apport? au domaine? Que faudrait-il faire
%   maintenant? Quelle est la bonne \emph{prochaine} question?

This work hopes to provide a strong enough representation of quantification in frame-like modellings so that it could be used in other analyses and implemented in programs. In any case, it is only a starting point, and further developments of quantifying structures in frames should follow. For example, a remaining question is: how can we model the particular interaction of dot objects (words having an intrinsic and variable ambiguity, like \textit{book}: physical object or content) with quantification?

\end{point}

\newpage
\setcounter{page}{1}

\tableofcontents

%%%%%
\section{Context}

%%%
\subsection{The DFG Collaborative Research Centre 991}

The \href{https://frames.phil.uni-duesseldorf.de/}{DFG Collaborative Research Centre 991} (CRC991) is a research project funded by the German Research Community (Deutsche Forschungsgemeinschaft). The center is supported by and based in the buildings of the \href{https://www.uni-duesseldorf.de/home/startseite.html}{Heinrich-Heine Universität of Düsseldorf} (HHU) and is attached to the faculty of humanities. The HHU is one of the youngest higher education institutions in the state of North Rhine-Westphalia, in western Germany. It includes 5 faculties all located in the south of Düsseldorf, harboring 35,000 students, more than 2,000 lecturers and 900 further employees. The diversity of social behaviours is specially protected and celebrated by the institution. The student life is rich and encouraged by the university itself. Research is also well promoted, as the diverse collaborative centers attest it.

The CRC991 gathers 56 members, including 16 PIs (professors) and 18 PhD students, all divided into 10 active projects. They organize various events and seminars, that I had the chance to attend. The SToRE Graduate Training Program is an intensified model of doctoral supervision guiding PhD students in their professional career. Being among them helped me better apprehend how this period of life looks like.

The topic uniting the members is the structure of representations in language, cognition, and science. They especially work on the notion of concepts and mental representations (representation of the world in the mind), under the theory first advocated by Lawrence Barsalou \cite{Barsalou:92}: semantic frames. This is the starting point of different studies concerning a formal approach of the philosophical notion of concept, their psychological relevance, the meaning conveyed by natural language and its connection with syntax. This last axis is also linked to the Fillmore's frame program \cite{Fillmore:82}, aiming at capturing the relations between semantic and syntactic dependencies. The unsupervised frame induction project uses the corresponding developed formalism.

%%%
\subsection{The TreeGraSP project}

\href{https://treegrasp.phil.hhu.de/}{TreeGraSP} is a research project funded by an ERC Consolidator Grant and led by Professor Doctor Laura Kallmeyer, being also the director of the CRC991 and my supervisor. Its goal is to provide a statistical grammar (Role and Reference Grammar or Tree-Adjoining Grammar) suited for semantic parsing (with frames) and to develop adapted resources (\href{http://xmg.phil.hhu.de/}{XMG compiler} and \href{https://github.com/spetitjean/TuLiPA-frames}{TuLiPA parser}).

My internship is affiliated to the A02 project of the CRC991: ``Argument linking and extended locality: A frame-based implementation''. However my work is a bit peripheral in the sense that a modelling of quantification in frames has been expected and thought of for a long time, but it does not exactly fit in the topic of the current funding period.

%%%%%
\section{Research topic}

%%%
\subsection{The concept of frames}

\subsubsection{A theory of Natural Language semantics}

\textbf{Barsalou's frames} \cite{Barsalou:92} are a theorized notion of mental representations of the world with an aspiration to cognitive adequacy \cite{Loebner:14}. They are made of concepts which are connected by attributes. The functionality (i.e. that the relations are functions) of the latter is fundamental. In natural language semantics the \textbf{basic concepts} are individuals (\type{cat}, \type{Sarah}, \type{city},...), events (\type{seeing}, \type{stroking}, \type{spying},...) and properties (\type{white}, \type{tall},...). Attributes are \textbf{thematic roles} (i.e. semantic dependencies), like \feat{agent} (deliberately performs the action), \feat{theme} (undergoes the action but does not change its state), \feat{manner} (the way in which an action is carried out), \feat{size} (size of the individual), etc. The list is wide and subtle meaning discrepancies are not investigated here, except quickly in paragraph \ref{subsub:constraints}. The main concept expressed is called the center frame. This way, the global frame can be seen as smaller frames connected to one another.

In this sense, frames extend the neo-Davidsionan approach to semantics \cite{Parson:90}. This theory stipulates that, contrary to the usual verbal predicates, events should be split into a full-fledged event concept, which the different semantic dependencies are connected to by thematic roles. For example, instead of a predicate $\type{stroke}(x,y)$, the right decomposition is $\exists e.~\type{stroking}(e)\wedge \feat{agent}(e,x)\wedge \feat{theme}(e,y)$. This allows us to add as many optional dependencies as needed, adverbial modifiers and also to represent the same way related lexical items having just different dependencies (e.g. an intransitive version of a transitive verb does not have an object). Moreover different syntactic structures conveying the same meaning (e.g. active and passive voice sentences) can also be represented the same way.

\subsubsection{Feature Structures}

Frames are usually formalized as \textbf{Feature Structures} (FS). They consist of values, which can be either atomic or a set of couples of a feature and a value. Set $\TYPE$ a finite set a base types, $\ATTR$ a finite set of attributes and $\NLABEL$ a set of labels. A Feature Structure Value is defined as a triplet of the form $FV::= (L,T,A)$, with $L\subseteq\NLABEL$, $T\subseteq \TYPE$ and $A\subseteq\ATTR\times FV$, $A=\emptyset$ being the base case. A Feature Structure (FS) is then a nonempty set of Feature Structure Values where all label sets $L$ are nonempty. This last constraint is known as the reachability constraint, i.e. each subFS must be attribute-accessible from a labelled FS Value.

A Feature Structures can be represented with an Attribute-Value Matrix (e.g. Fig.~\ref{avm-fs:tall-daughter} on the left), where the features (here attributes) are depicted as a list with their values on the right, and the types at the top. The center frame is put at the top of the list. Another depiction is as a labelled and tagged directed graph (e.g. Fig.~\ref{avm-fs:tall-daughter} on the right). Nodes are FS Values, with labels inside and types tagging them in italic. Edges tagged by features in small caps connect FS Values. The center node is indicated by a rectangle node. Moreover, each present label labels a unique node, so this depiction cannot hold redundancy. As it also allows to better perceive relations between distant concepts in a big frame at one glance, only this depiction style will be used in the following.

\begin{figure}[ht]
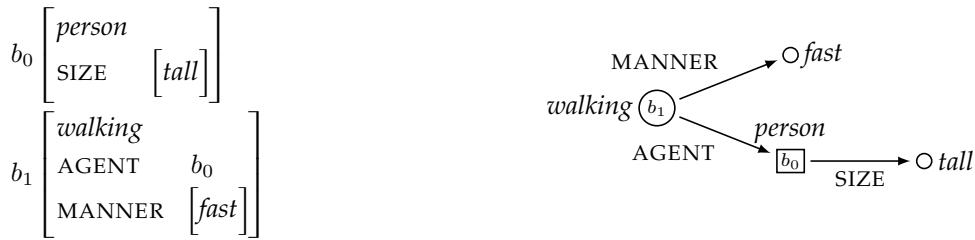

\begin{minipage}{0.35\textwidth}
\centering
\begin{tabular}{l}
\begin{avm}
$b_0$ \[ \type{person} & \\ \feat{size} & \[ \type{tall} \] \]
\end{avm} \\[-1ex]
\begin{avm}
$b_1$ \[ \type{walking} & \\ \feat{agent} & $b_0$ \\
    \feat{manner} & \[ \type{fast} \] \]
\end{avm}
\end{tabular}
\end{minipage}
\begin{minipage}{0.6\textwidth}
\begin{fs}
\draw (0pt, -20pt)node[draw, rectangle, label={90:\type{person}}](0){\fsbase{$b_0$}};
\draw (-50pt, 0pt)node[nod, label={180:\type{walking}}](1){\fsbase{$b_1$}};
\draw (0pt, 20pt)node[nod, label={0:\type{fast}}](2){};
\draw (50pt, -20pt)node[nod, label={0:\type{tall}}](3){};
\path[->] (1) edge node[below left]{\feat{agent}}(0);
\path[->] (1) edge node[above left]{\feat{manner}}(2);
\path[->] (0) edge node[below]{\feat{size}}(3);
\end{fs}
\end{minipage}
\caption{Examples of frames (Feature Structures) for the expression \textit{tall fast walker}, as an Attribute-Value Matrix on the left and as an equivalent directed graph on the right}
\label{avm-fs:tall-daughter}
\end{figure}

On Fig.~\ref{avm-fs:tall-daughter} we can already see how frames help to better control the semantics of words. The term \type{walker} can be split into two concepts: a \type{walking} event which \feat{agent} is a \type{person}. This explains how the adjectives \textit{tall} and \textit{fast} account for different types of contributions to the noun \textit{walker}, namely by being attached to the \type{walking} or the \type{person} concept respectively. A good syntax-semantics interface formalism should be able to capture this subtlety.

Feature Structures can be compared to each other by \textbf{subsumption}. We say that a FS subsumes another one intuitively if the second is obtained from the first by adding information. The subsumed frame can have new disjoint concepts, newly connected ones, additional details or identification of concepts. It is performed by types, labels, attributes or FS Values addition. In directed graphs identification of concepts leads besides to node merging.

Subsumption can be defined by the existence of a labelled and tagged directed graph homomorphism: a directed graph homomorphism respecting types, attributes and labels. An example is depicted in Fig.~\ref{fs:hom-cat} with the node mapping in gray dotted lines. Here, the fact that a \type{cat} is an \type{animal} is not taken into account. But such a type hierarchy (ontology) could be directly implemented in the FS, instead of being introduced by constraints (see. paragraph \ref{subsub:constraints}).

\begin{figure}[ht]
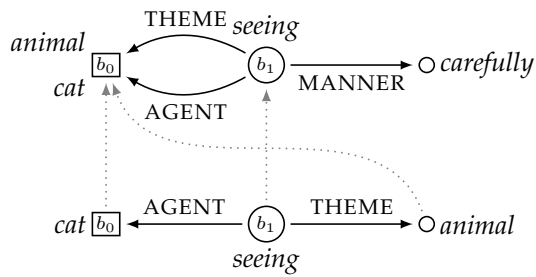

\begin{fs}
\begin{scope}
\draw (0pt, 0pt)node[draw, rectangle, label={180:\type{cat}}](11){\fsbase{$b_0$}};
\draw (60pt, 0pt)node[nod, label={-90:\type{seeing}}](10){\fsbase{$b_1$}};
\draw (120pt, 0pt)node[nod, label={0:\type{animal}}](12){};
\path[->] (10) edge node[above]{\feat{agent}}(11);
\path[->] (10) edge node[above]{\feat{theme}}(12);
\end{scope}

\begin{scope}[yshift=60]
\draw (0pt, 0pt)node[label={160:\type{animal}}](22){\fsbase{$\textcolor{white}{b_0}$}};
\draw (0pt, 0pt)node[draw, rectangle, label={200:\type{cat}}](21){\fsbase{$b_0$}};
\draw (60pt, 0pt)node[nod, label={90:\type{seeing}}](20){\fsbase{$b_1$}};
\draw (120pt, 0pt)node[nod, label={0:\type{carefully}}](23){};
\path[->] (20) edge[bend left] node[below]{\feat{agent}}(21);
\path[->] (20) edge[bend right] node[above]{\feat{theme}}(22);
\path[->] (20) edge node[below]{\feat{manner}}(23);
\end{scope}

\path[->] (10) edge[hom] (20);
\path[->] (11) edge[hom] (21);
\path[->] (12) edge[hom,out=110,in=-70] (21);
\end{fs}
\caption{Two Feature Structures: \textit{cat seeing an animal} at the bottom subsuming \textit{cat carefully seeing itself} at the top}
\label{fs:hom-cat}
\end{figure}

Another useful operation between Feature Structures is \textbf{unification}. The unification of two FSs, when it exists, is the minimal FS carrying the information of both inputs. Here, it simply consists of putting both FSs next to each other and merging the nodes having the same label. So it corresponds to the least upper bound with respect to (w.r.t.) subsumption. Fig.~\ref{fs:unif-John} shows an example of unification. It may be remarked that these frames could be refined by splitting \type{Mary} into \type{person} of \feat{name} \type{Mary} (and same for \type{John}), but we adopt a more concise notation, for it is not the main point. From now on center nodes will not be represented any longer because we are considering not only individuals but whole phrases.

\begin{figure}[ht]
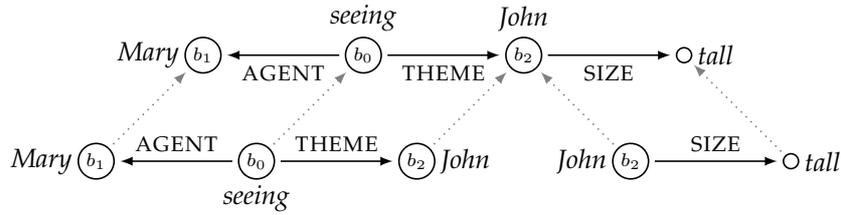

\begin{fs}
\begin{scope}[xshift=-40]
\draw (0pt, 0pt)node[nod, label={180:\type{Mary}}](11){\fsbase{$b_1$}};
\draw (60pt, 0pt)node[nod, label={-90:\type{seeing}}](10){\fsbase{$b_0$}};
\draw (120pt, 0pt)node[nod, label={0:\type{John}}](12){\fsbase{$b_2$}};
\path[->] (10) edge node[above]{\feat{agent}}(11);
\path[->] (10) edge node[above]{\feat{theme}}(12);
\end{scope}

\begin{scope}[xshift=160]
\draw (0pt, 0pt)node[nod, label={180:\type{John}}](22){\fsbase{$b_2$}};
\draw (60pt, 0pt)node[nod, label={0:\type{tall}}](23){};
\path[->] (22) edge node[above]{\feat{size}}(23);
\end{scope}

\begin{scope}[yshift=40]
\draw (0pt, 0pt)node[nod, label={180:\type{Mary}}](31){\fsbase{$b_1$}};
\draw (60pt, 0pt)node[nod, label={90:\type{seeing}}](30){\fsbase{$b_0$}};
\draw (120pt, 0pt)node[nod, label={90:\type{John}}](32){\fsbase{$b_2$}};
\draw (180pt, 0pt)node[nod, label={0:\type{tall}}](33){};
\path[->] (32) edge node[below]{\feat{size}}(33);
\path[->] (30) edge node[below]{\feat{agent}}(31);
\path[->] (30) edge node[below]{\feat{theme}}(32);
\end{scope}

\path[->] (10) edge[hom] (30);
\path[->] (11) edge[hom] (31);
\path[->] (12) edge[hom] (32);
\path[->] (22) edge[hom] (32);
\path[->] (23) edge[hom] (33);
\end{fs}
\caption{Three Feature Structures: \textit{Mary sees John} at the bottom on the left, \textit{John is tall} at the bottom on the right and their unification \textit{Mary sees John and John is tall} at the top} 
\label{fs:unif-John}
\end{figure}

\subsubsection{Attribute-Value Logic}

The idea of coupling attributes (or features) with (recursive) values is a general logical tool. It has been used extensively in the literature of computational linguistics, especially for constraint-based grammars (Head-driven Phrase Structure Grammar, Lexical Functional Grammar). This kind of formalism aiming at capturing the syntax of Natural Language posits that every sentence is possible, except those infringing a certain set of generated constraints. \textbf{Attribute-Value Logic} (AVL) \cite{Johnson:88} was developed as a description logic to express Attribute-Value grammars. The axiomatisation enjoys interesting properties of completeness, soundness and compactness. Well-formed formulas can be mapped to quantifier-free First Order Logic with equality and function symbols, proving the NP-completeness of its satisfiability problem.

Working in the framework of categories has been investigated by using topological notions applied to logic \cite{Osswald:99}. Thanks to this, theoretical transformations between formulas (opens), proofs (set of closed sets) and fact-sets (points) can be produced. The category of fact-sets is then showed to be isomorphic to the one of feature structure.
AVL was recently reused to specify frames \cite{KallmeyerOsswald:13}. Here, the point is rather to describe what are the concepts at stake, and how they are connected. Hybrid Logic (a modal logic) was also tested \cite{KallmeyerOsswaldPogodalla:17}, because it figures out to be relevant when talking about graph-like structures and is already linked to other grammatical systems. But it lacks the fundamental functionality of attribute, without bringing more computational power or convenience.

The core of any AVL is made of two sorts: the descriptions $\AVD$ and the formulas $\AVF$. The descriptions are terms specifying what holds at a certain node. Formulas, on the contrary, tell what happens in the general frame.

\begin{equation}
\label{eq:avl-core}
\begin{array}{rrl}
\AVD:& \phi,\psi ::=& \feat{P}:\phi~|~t~|~\phi\wedge\psi~|~\top\\
\AVF:& \chi,\xi  ::=& k\cdot\phi~|~k\cdot p\trieq l\cdot q~|~\chi\wedge\xi~|~\top
\end{array}
\end{equation}

\noindent with $\feat{p}$ an attribute, $p$ and $q$ words of attributes, $t$ a type and $k$ and $l$ labels. An $\AVD$ can be considered encoding FS Values, and $\trieq$ stands for path equality. We can enrich it with relations if wanted and add syntactical sugar. Equation \eqref{eq:Mary} provides an example of formula for the sentence \ref{ex:Mary}.

\ex. \label{ex:Mary} Mary walks fast.

\begin{equation}
\label{eq:Mary}
b_0\cdot(\type{walking}\wedge\feat{agent}:\type{Mary}\wedge\feat{manner}:\type{fast})
\end{equation}

The key idea is to create a \textbf{minimal model} out of an Attribute-Value Formula. It has to be understood as the minimal information that one has in mind at the utterance of a sentence, which semantics is described by such a formula. From a practical point of view, AVL is an easy way to specify the meanings of words in a lexicon whereas Feature Structures are better to manipulate at a large scale.

%%%
\subsection{Lexicalized Tree-Adjoining Grammars with features}

Thanks to their simplicity, Feature Structures have been employed in lots of formalisms of computational linguistics. One of the first uses is in grammars to specify grammatical features. For example, the agreement of the subject (e.g. $\feat{person}:\type{1st}$, $\feat{number}:\type{singular}$) have to be the same as the agreement of the verb, or more exactly they must be unifiable (e.g. \textit{see} can be conjugated to the first or second person of singular, or plural). Some tools, like the \href{https://www.nltk.org/}{NLTK python library}, already use an implemented version with Context-Free Grammars.

However, CFGs are well known to be unable to capture long-distance phenomena \cite{Kallmeyer:10} present in some languages, like Dutch. That is why \textbf{Lexicalized Tree-Adjoining Grammars with features} (LTAG), which are mildly context-sensitive, are used instead to account for the syntactic structure of Natural Languages. As this formalism is widely spread, its presentation here is just a quick non-formal overview to be able to follow the rest of the report. For further details, we refer to \cite{Kallmeyer:10}.

LTAGs are close to CFG, but each word (lexical item) has an own elementary tree. There are two kinds of it: initial trees and auxiliary trees. The latter have exactly one foot node (indicated via an asterisk) which symbol is identical to that of the root. Two operations permit to combine these trees. Substitution consists in attaching an initial tree to a non-terminal node of another tree if the symbols match. Adjunction consists in inserting a whole auxiliary tree in another tree at a place where the symbols match. The lower part of the second tree is put at the foot node of the auxiliary tree.

Features Structures can be added to every non-terminal node at the top of at the bottom of it \cite{KallmeyerRomero:08}, usually depicted as little Attribute-Value Matrices. During substitution and adjunction, FSs at the same place has to unify and eventually the top and the bottom FSs of any node also. This yields label equations. Intuitively, this enables information transmission through the different elementary trees. An example is depicted in Fig.~\ref{tag:Mary}.

\begin{figure}[ht]
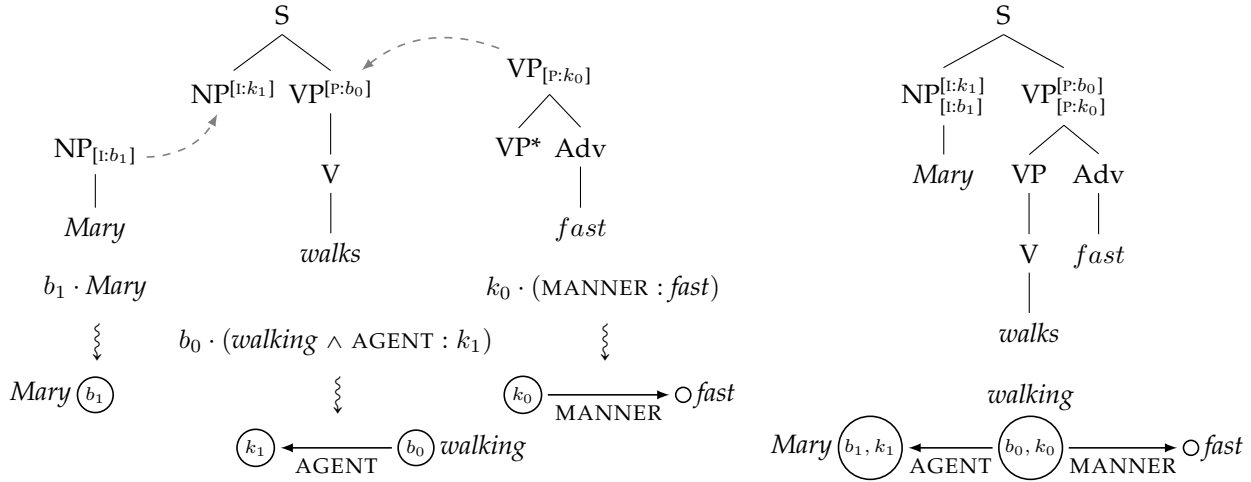

\begin{fs}
\begin{scope}[shift={(-70pt,-50pt)}]
\Tree 
[.\node(np1){NP$_{\text{[{\sc i}:$b_1$]}}$}; \type{Mary} ]
\end{scope}
\begin{scope}[shift={(-70pt,-100pt)}]
\draw (0,0) node {$b_1\cdot \type{Mary}$};
\draw (0,-20pt) node {\rotatebox{-90}{$\rightsquigarrow$}};
\draw (0,-40pt) node[nod,label={180:\type{Mary}}]{\fsbase{$b_1$}};
\end{scope}

\begin{scope}
\Tree 
[.S \node(np0){NP$^{\text{[{\sc i}:$k_1$]}}$}; [.\node(vp0){VP$^{\text{[{\sc p}:$b_0$]}}$}; [.V \textit{walks} ] ] ]
\end{scope}
\begin{scope}[shift={(20pt,-120pt)}]
\draw (0,0) node {$b_0\cdot(\type{walking}\wedge\feat{agent}:k_1)$};
\draw (0,-20pt) node {\rotatebox{-90}{$\rightsquigarrow$}};
\draw (30pt,-40pt) node[nod,label={0:\type{walking}}](00){\fsbase{$b_0$}};
\draw (-30pt,-40pt) node[nod](01){\fsbase{$k_1$}};
\path[->] (00) edge node[below]{\feat{agent}} (01);
\end{scope}

\begin{scope}[shift={(100pt,-20pt)}]
\Tree 
[. \node(vp2){VP$_{\text{[{\sc p}:$k_0$]}}$}; VP* [.Adv $fast$ ] ]
\end{scope}
\begin{scope}[shift={(120pt,-100pt)}]
\draw (0,0) node {$k_0\cdot(\feat{manner}:\type{fast})$};
\draw (0,-20pt) node {\rotatebox{-90}{$\rightsquigarrow$}};
\draw (-30pt,-40pt) node[nod](00){\fsbase{$k_0$}};
\draw (30pt,-40pt) node[nod,label={0:\type{fast}}](02){};
\path[->] (00) edge node[below]{\feat{manner}} (02);
\end{scope}

\draw[adj,bend right] (np1) to (np0);
\draw[adj,bend right] (vp2) to (vp0);

\begin{scope}[shift={(270pt,0pt)}]
\Tree 
[.S [.NP$^{\text{[{\sc i}:$k_1$]}}_{\text{[{\sc i}:$b_1$]}}$ \type{Mary} ]
    [.VP$^{\text{[{\sc p}:$b_0$]}}_{\text{[{\sc p}:$k_0$]}}$
        [.VP [.V \textit{walks} ] ]
        [.Adv $fast$ ] ] ]
\end{scope}
\begin{scope}[shift={(250pt,-120pt)}]
\draw (30pt,-40pt) node[nod,label={90:\type{walking}}](00){\fsbase{$b_0,k_0$}};
\draw (-30pt,-40pt) node[nod,label={180:\type{Mary}}](01){\fsbase{$b_1,k_1$}};
\path[->] (00) edge node[below]{\feat{agent}} (01);
\draw (90pt,-40pt) node[nod,label={0:\type{fast}}](02){};
\path[->] (00) edge node[below]{\feat{manner}} (02);
\end{scope}
\end{fs}
\caption{Example of LTAG parsing of the sentence \ref{ex:Mary}. On the left, the \textit{Mary} tree substitutes at the subject position of the \textit{walks} tree, and the \textit{fast} tree adjoins at the VP node of \textit{walks}. Below each elementary tree the associated AVL semantic specification is transformed ($\rightsquigarrow$) to a minimal model below it. On the right the resulting tree, which FS Value equations ($k_1\trieq b_1$ and $k_0\trieq b_0$) trigger the unification of lexical frames.}
\label{tag:Mary}
\end{figure}

Following the traditional syntactical analysis of English, we use part-of-speech (morpho-syntactic categories) tags as non-terminal symbols: S (sentence), N (noun), NP (noun phrase), V (verb), VP (verb phrase), Adv (adverb), P (preposition), PP (prepositional phrase) and D (determiner). Terminal nodes (or anchors) are the lexical items. Features used are usually grammatical features. Here, as we will not focus on the exact syntax, we do not represent them. But we employ semantic features: \textsc{i} (individual) and \textsc{p} (predicate).

The great contribution of \cite{KallmeyerOsswald:13} is to couple each elementary tree with a semantic part, specified in AVL and use the same labels as in the tree. Thus the semantic realizations of syntactic dependencies are caught (e.g. the \feat{agent} of \type{walking} is the subject of \textit{walks}). This idea to connect grammatical functions and thematic roles has been studied for a while in the theory of \textbf{Fillmore's frames} \cite{Fillmore:82}. The \href{https://framenet.icsi.berkeley.edu/fndrupal/}{FrameNet} tool is developed with dependency grammars and lexical semantic FSs. It is another major source of inspiration in the computational-linguistics part of the CRC991.

The minimal model construction can be processed just after the semantic specification and stored in a lexicon. The derivation in LTAGs induces label equations, with respect to which the lexical models have to unify to produce the sentential final frame.

%%%
\subsection{Quantification}

\subsubsection{Quantification phenomena in Natural Language}

In Natural Language (NL) \textbf{quantification} is a phenomenon raised by determiners (\textit{a/an, some, every, each, all, the, most, few, little, two,...} or complex determiner \textit{a lot of, less than four, a finite number of,...}) and specifying the quantity of individuals of a certain kind involved in a predicate. The common syntactical analysis is the following: a \textbf{determiner} $D$ in a Noun Phrase (NP) has the rest of the NP (called noun nucleus) as its \textbf{restrictor} and a part of the main Verb Phrase (VP) as its \textbf{nuclear scope}. The determiner and its restrictor form a quantifier. Some quantifiers have an equivalent in classical logic, like $\forall x\in\theta$ for \textit{every} $\theta$ or $\exists x\in\theta$ for \textit{some} $\theta$. Formulas in \ref{eq:bark-dog} show how sentence \ref{ex:bark-dog} can be turned into a type-logical semantic formula.

\exi. \a.\label{ex:bark-dog} [NP [D Every ] [N dog ]Restrictor ] [VP [V barks ] ]Nuclear-Scope .
    \b.\label{eq:bark-dog} $\forall x\in\type{dog},~x\in\type{barking}$ \hfill (Set Theory style)\\
    $\rel{every}(x,\type{dog}(x),\type{barking}(x))$ \hfill (modern GQT style)\\
    $\forall x.\type{dog}(x)\to(\exists e.~\type{barking}(e)\wedge\feat{agent}(e,x))$ \hfill (FOL style with neo-Davidsonian decomposition)
    
However the distinction between the determiner and its restrictor is not common in formal logic. The General Quantifier Theory (GTQ) \cite{BarwiseCooper:81} was one of the first to provide a representation of quantification in logic. It posits that the restrictor and the nuclear scope represent two sets $\theta$ and $\delta$ (or $\eta$-expanded $\{x~|~\theta(x)\}$ and $\{x~|~\delta(x)\}$) respectively, and the interpretation of the determiner is a relation $D$ between both. For example $\rel{every}~\theta~\delta$ if and only if (iff) $\theta\subseteq\delta$, or $\rel{some}~\theta~\delta$ iff $\theta\cap\delta\neq\emptyset$.

The fundamental property of NL quantifiers is conservativity, i.e. for every $\theta,\delta$, $D~\theta~\delta$ iff $D~\theta~(\delta\cap\theta)$. For \ref{ex:bark-dog} it means that \textit{every dog is a dog who barks}. This namely allows us to only work within the restrictor set. Other properties, like monotonicity w.r.t. the first or second argument, lead to a classification of quantifiers.

The restrictor and nuclear scope are called scopes of the determiner. Whereas the restrictor is usually well-defined as the noun nucleus, the nuclear scope can be ambiguous in an utterance. Especially when the sentence encompasses several quantifiers, most of the time it is unclear which one scopes over (or outscopes) the other, i.e. which one is in the scope of the other. This yields ambiguity, with several possible readings (unambiguous resolutions). This phenomenon is directly linked with the ordering of quantifiers in logic. For instance, in \ref{ex:stroking} both \textit{a} and \textit{every} scope over the verb. The latter is the minimal predicate that they can have in their nuclear scope. So the sentence is ambiguous between the \ref{eq:stroking-some} and the \ref{eq:stroking-every} readings. 

\exi. \a.\label{ex:stroking} [NP A [N student ] ] [VP [V stroked ] [NP every [N cat ] ] ] .
    \b. \label{eq:stroking-some} $\exists x.~\type{student}(x)\wedge (\forall y.~\type{cat}(y)\to\type{stroking}(x,y))$ \hfill ($\rel{a}>\rel{every}$ reading)
    \c. \label{eq:stroking-every} $\forall y.~\type{cat}(y)\to (\exists x.~\type{student}(x)\wedge\type{stroking}(x,y))$ \hfill ($\rel{every}>\rel{a}$ reading)

This property of possessing one or several scopes and inferring an idea of quantification over some type of variable is also shared by other kinds of words. Some adverbs attached to verbs (\textit{possibly, certainly, always, never, sometimes,...}) or attitudes verbs (\textit{think, believe,...}) are typically able to introduce some quantification over possible worlds, time or location. Nevertheless, they work a bit differently from NP quantifiers. As this work is a first jump into quantification in frames, we will only focus on determiners, and exactly on simple and logical determiners: \textit{every}, \textit{no} and \textit{a/an}\footnote{\textit{a/an} is approximated here to the same meaning as \textit{some}, i.e. only existential}.

\subsubsection{Computational-linguistics treatment of NP quantifiers}

Modelling NP quantifiers has lead to lots of approaches, but the majority falls under the scope of $\lambda$-calculus with typed logic. Content and function words are represented by terms or operators, and the meaning is a truth-conditional (or dynamic semantic) formula \cite{Montague:73,BeyssonBlindDeGrooteGuillaume:17}. The strategies to derive the different possible reading from the surface parsing (e.g. quantifier rising) are diverse and depending on the grammar used. The most promising theories suggest an \textbf{underspecified representation} of the sentence (e.g. Minimal Recursion Semantics \cite{CopestakeFlickingerSagPollard:05}). In this shape, the formula is not entirely built, but \textbf{scope constraints} specify how parts behave and can be arranged. Hole Semantics inspired Dominance Graphs \cite{KollerThaterPinkal:10}, which provide efficient algorithms to compute the set of possible readings \cite{KollerThater:05}. Interesting results also concerns the computation of the set of (approximated) weakest readings \cite{KollerThater:10}, the readings which logically entail every other one.

Other kinds of semantic theories, such as model-theoretic ones, cannot still represent quantifiers as part of the model. Indeed this phenomena requires second-order entities and relations between them. Graph-like modellings like semantic nets \cite{Helbig:06} just represent quantifiers as nodes and edges without any other interpretation. However, they record their functioning by special attributes (determination, variability, cardinality,...) and distinguish collective, cumulative or distributive readings (not investigated here). These features are now being developed in the quantification annotation project of the \href{https://www.iso.org/obp/ui/#iso:std:iso:24617:-6:ed-1:v1:en}{ISO Semantic Annotation Framework} \cite{Bunt:19}, aiming at providing rich enough durable conventions for NL annotation. Another promising approach is the recent incorporation of quantification in Vector Semantics \cite{HedgesSadrzadeh:19}. Such function words receive a category interpretation, which is consistent with the functioning of determiners but independent from the word. Work is still in progress to try to learn a distributional meaning for each quantifier, as well as entailment properties. Let us finally cite the Discourse Representation Theory \cite{KampGenabithReyle:10}, which provides a simple and powerful tool, especially good to handle discourse-level phenomena, like quantification, anaphora and donkey sentences\footnote{A donkey sentence is a sentence of the kind \textit{\label{fn:donkey} Every farmer who owns a donkey$_i$ loves it$_i$.}, where the quantifier \textit{every} raises the individual \textit{donkey}$_i$ which can be retrieved (by \textit{it}$_i$) outside the scope of the quantifier \textit{a} introducing it.}. It must be born in mind that, given that determiners are prevalent in NL, quantification modelling is a mandatory milestone for every semantic or syntax-semantics-interface theory.

Several various attempts of introducing quantification for frames show that there is no canonical way of doing so. In the tradition of type-logical analyses, we can produce quantified formulas outside the model \cite{KallmeyerOsswaldPogodalla:17}, which describe how it is (or should be). A more exotic approach consists in considering quantification on the set of frame ordered by subsumption \cite{Muskens:13}. Yet, the idea that a Feature Structure may be also viewed as the set of FSs subsumed by it is not new \cite{Hegner:93}. But a real implementation of determiners in frames is more about involving full-fledged quantifying entities in the model. The first occurrence of this kind of entities (reminding semantic networks) lies in \cite{KallmeyerRichter:14}. Whole nodes have determiner types and scope attributes toward other elements of the FS. One advantage of this system is the included translation into Dominance Graphs. However, no other semantics is given to these nodes, and it fails to distinct individuals and set of individuals. Moreover, it does not go beyond the sentence level.

That is why another system has to be developed onto Feature Structures to be able to account for second-order entities out of a predicate to model properly quantification in frames.

%%%%%
\section{Feature Structures with wrappings}

%%%
\subsection{Determiners and sets in frames}

The expected system has to be an extension of the current framework for frames \cite{KallmeyerOsswald:13}, leading to the following criteria. It should provide a formal AVL syntax and Feature Structure semantics, with similar properties (subsumption, homomorphisms, minimal model). The supply should be powerful enough to account for special characteristics of quantification. Especially, variable sharing and scope constraints need special attention. But it has to be simple enough to be still translatable into First-Order Logic. Translation into other existent modellings (without needing them) would also be relevant to show how this system would behave in comparison to known ones. Following the formal pipeline of \cite{KallmeyerOsswald:13}, the system should be derived from an LTAG parsing to form an underspecified sentential semantic representation. Finally, as an additional contribution, we would like to be able to state the place of this representation with respect to the part representing the knowledge of the world (called context, or instance here).

In the idea of \cite{BarwiseCooper:81}, determiners could be modelled by relations between sets. Yet, frames fail systematically to model correctly sets and conjunctive elements because there is no notion of grouping. That is why one should be introduced. Frame Types, introduced in \cite{BaloghOsswald:19}, suggest to take a subframe and transform it into a single type. We follow this idea of collapsing a whole FS to a single node.

My contribution is this extension that I call \textbf{wrappings}, which adds to Feature Structures a sufficient and simple enough tool to model quantification in frames.

%%%
\subsection{The formal system}

\subsubsection{Syntax}

In order to define the syntax of \textbf{extended AVL} (Attribute-Value Logic), let us begin with the signature declaration. As usual, a finite set of type $\TYPE$ ($t$,...) and a finite set of attributes $\ATTR$ (\feat{p},\feat{Q},...) are necessary. Relations $\REL=\bigcup_{m>1}\REL_m$ ($\rel{r}_m$,...) of different possible arities will be used to bridge the need of some non-functional relationships. The standard labels are called base-labels or node names $\NNAME$ ($b$,$b_0$,$b_1$,...). Together with node variables $\NVAR$ ($x$,$y$,...) they are referred to as node labels $\NLABEL=\NNAME\uplus\NVAR$. Wrappings may be seen as envelopes around nodes, but which can be also treated as nodes themselves. This is enabled by including the set of wrapping variables $\WVAR$ ($T$,$S$,...) in the set of general labels $\LABEL=\NLABEL\uplus\WVAR$ ($k$,$l$,...). We work with infinitely many labels, but in practice this set is finite. 

\begin{equation}
\label{eq:avl}
\begin{array}{lrl}
     \AVD:& \varphi,\psi ::=& \feat{P}:\varphi~|~\type{t}~|~k~|~\varphi\wedge\psi~|~\varphi\vee\psi~|~\top\\
     \AVF:& \chi,\xi ::=& k\cdot\varphi~|~k\cdot p\trieq l\cdot q~|~\rel{r}_m((k_i\cdot p_i)_{i< m}) ~|~T:||x\cdot\varphi||~|~\chi\wedge\xi~|~\neg \chi~|~\top
\end{array}
\end{equation}

\noindent with $t\in\TYPE$, $\feat{P}\in\ATTR$, $k,k_i,l\in\LABEL$, $p,p_i,q\in\ATTR^*$, $x\in\NVAR$ and $\rel{r}_m\in\REL_m$ a relation of arity $m>1$ (types can be viewed as relations of arity 1). We also define some abbreviations : $\feat{P}_1\feat{P}_2...\feat{P}_n:\varphi$ instead of $\feat{P}_1:\feat{P}_2:...\feat{P}_n:\varphi$ and $k$ instead of $k\cdot\varepsilon$, where $\varepsilon$ is the empty word of $\ATTR^*$. The construction $k$ in $\AVD$ is just syntactic sugar for $...\trieq k\cdot\varepsilon$. Negation in $\AVD$ is admissible, i.e. not necessary to add because it could be expressed with negation in $\AVF$.

Except for the addition of disjunction in $\AVD$, negation and $\AVF$ and relations (which were already allowed in \cite{KallmeyerOsswald:13}), the only new constructor is $T:||x\cdot\phi||$. Attribute-Value Descriptions $\varphi\in\AVD$ are still thought of as one-place predicates. That is why they are used to describe the content of wrappings, symbolized by vertical double bars. The node variable $x$ labels $\phi$ and the wrapping variable $T$ labels the wrapping so that it could be referred to elsewhere in the formula. Note that the hierarchy between the two sorts forbids embedding of wrappings. Even if the way $\AVD$ are specified seems only to be able to describe trees (up to the disjunction), using several times the same wrapping variable allows every kind of directed graph. It must be remarked that attributes and relations can hold between nodes or wrappings, and even cross wrappings boundaries, what will be used. Eq.~\eqref{eq:abstr} gives an example of abstract AVL formula dealing with this.

\begin{equation}
\label{eq:abstr}
T_1:||x\cdot(\feat{p}:(\feat{p}:x\vee\feat{p}:y))||~\wedge~b\cdot\feat{q}:T_3~\wedge~T_2:||y\cdot \feat{p}:t||~\wedge~\rel{r}(b,y)~\wedge~x\cdot\feat{p}\trieq y\cdot\varepsilon~\wedge~T_3:||z\cdot\feat{Q}:\top|| %~\wedge~\neg z\cdot\top
\end{equation}

\subsubsection{Semantics}

We adopt a denotational semantics based on labelled and tagged directed graphs, i.e. with nodes and edges. We define a \textbf{Feature Structure with Wrappings} (FSW) on the signature $(\TYPE,\ATTR,$ $\REL,\NNAME,\NVAR,$ $\WVAR)$ as $F=\ <\!V,\W,\I\!>$ where $V$ is a nonempty set (the nodes), $\W\subseteq\wp(V)$ (the wrappings, with $\W\cap V=\emptyset$) is a set of disjointed nonempty subsets of $V$ ($\forall W,X\in\W,~W\cap X=\emptyset$ and $\forall W\in\W,~W\neq\emptyset$), and $\I$ is the interpretation function :

\begin{equation}
\label{eq:interpr}
 \I : \TYPE\to\wp(V),~~~\ATTR\to[V\rightharpoonup V],~~~\REL_m\to \wp(V^m),~~~\NNAME\rightharpoonup V
\end{equation}
   
\noindent for all $m>1$, where $\wp(Z)$ is the powerset of a set $Z$ and $f:Z\rightharpoonup Z'$ stands for partial function. We write $f(z)\downarrow$ for $z\in Z$ to mean $z\in\dom f$, i.e. $f$ is defined on $z$. We also write $Z\hookrightarrow Z'$ to mean an injection from $Z$ to $Z'$.

The nodes belonging to some wrapping are named wrapped nodes $\cup\W\eqdef\bigcup_{W\in\W}W$ and the remaining ones make up the complement $\comp\eqdef V\setminus\cup\W$. Therefore $\WS=\W\cup\{\comp\}$, called the set of \textbf{w-sets} ($P$,$Q$,...) for convenience, is a partition of $V$. So we can define $[v]$ for any $v\in V$ as the unique w-set $W\in\WS$ such that $v\in W$.

The expectation of such a definition is the properties it yields. The least we can ask is that the subsumption relation is an order (up to isomorphism). That is why \cite{KallmeyerOsswald:13} introduced the \textbf{reachability constraint} stipulating that every node be attribute-accessible from a base-labelled node. Each base-label is uniquely referring to a node, given by the interpretation function. Intuitively, a base-labelled node denotes an entity outside of an isolated model, which can be for example co-referred by anaphora. On the contrary, the idea of Frame Types considers general concepts (like the predicate used to build a set), not particular individuals anchored in the current world. So base-labels are not appropriate to label nodes in wrappings. However, it is a bad idea not to label nodes, even in wrappings. As Kallmeyer and Osswald noticed, the lack of the reachability constraint prevents subsumption to be an order. Namely, equivalence classes are too big and the relationship to information flow is lost. See Fig.~\ref{fs:eq-class} in appendix \ref{ap:fig} for an illustrated example.
% figs

The solution is to use variables as node labels in the model itself. By preventing variables to be bound in the syntax (especially extended AVL does not contain any quantifier), they can be reused freely. The idea behind that is that we may consider a model up to variable renaming, otherwise this choice would make no difference with base-labels. To have access to variables and their references we let a \textbf{model} be $M=\ <\!F,g\!>$, where $F$ is a FSW and $g=\ <\!\gN,\gW\!>$ is an assignment function with $\gN:\NVAR\rightharpoonup V$ and $\gW:\WVAR\rightharpoonup \W$. The reachability constraint can be now stated (notation just below):

\ex. \label{ex:reach} Every node in a w-set $P$ must be attribute-accessible by a variable or base-labelled node belonging to the same w-set.\\
    i.e. $\forall v\in V,~\exists k\in\NLABEL,p\in\ATTR^*,~\I_g(k)\in[v]\wedge\I(k,p)=v$

Now we can introduce the semantics of extended AVL formulas. For legibility we first define some notation:

\begin{equation}
\label{eq:i-ext}
\begin{array}{rl}
    \I_g(n) =& \I(n)\\
    \I_g(x) =& \gN(x)\\
    \I_g(T) =& \gW(T)\\
    \I_g(k,p\feat{q}) =& \I(\feat{q})(\I_g(k,p)) \text{ and } \I_g(k,\varepsilon) = \I_g(k)
\end{array}
\end{equation}

The two satisfaction relations are defined for a Feature Structure with Wrappings $F=\ <\!V,\W,\I\!>$, an assignment function $g$, a w-set $P\in\WS$ and a node $v\in P$.

\begin{equation}
\label{eq:sem}
\begin{array}{ll}
F,g,P,v\models t& \text{iff } v\in\I(t)\\
F,g,P,v\models \feat{p}:\varphi& \text{iff } F,g,P,\I(\feat{p})(v)\models\varphi\\
F,g,P,v\models k& \text{iff } \I_g(k)=v\\
F,g\models k\cdot\varphi& \text{iff } F,g,\comp,\I_g(k)\models \varphi\\
F,g\models k\cdot p\trieq l\cdot p& \text{iff }  \I_g(k,p)=\I_g(l,q)\\
F,g\models \rel{r}_m((k_i\cdot p_i)_{i<m})& \text{iff }  (\I_g(k_i,p_i))_{i<m}\in\I(\rel{r}_m)\\
F,g\models T:||x\cdot\varphi||& \text{iff } F,g,\I_g(T),\I_g(x)\models\varphi\\
\end{array}
\end{equation}

The boolean connectives are interpreted as usual. Remark that the constructor $k\cdot\varphi$ is the complement of $T:||x\cdot\varphi||$ in the sense that it requires the label to label a node in the complement set. The constructor $\trieq$, yet, does not require this and is a general statement about path equality. Moreover, some a priori authorized terms like $T\trieq b$ automatically leads to semantic failure thanks to $\W\cap V=\emptyset$. Fig.~\ref{fs:abstr} depicts the two minimal models of formula \eqref{eq:abstr}.
% in fact wrappings can have outgoing edges, but hard to write:
% x\cdot\top \wedge T\cdot p\trieq x

\begin{figure}[ht]
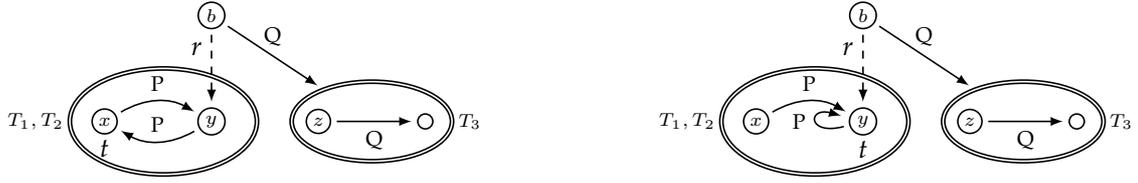

\begin{minipage}{0.5\textwidth}
\begin{fs}
\draw (22pt, 0pt)node[ft,minimum width=70pt,minimum height=40pt,label={180:\fsbase{$T_1,T_2$}}](ft1){};
\draw (0pt, 0pt)node[nod, label={-90:\type{t}}](0){\fsbase{$x$}};
\draw (40pt, 0pt)node[nod](1){\fsbase{$y$}};
\draw (40pt, 40pt)node[nod](2){\fsbase{$b$}};
\draw (100pt, 0pt)node[ft,minimum width=60pt,minimum height=30pt,label={0:\fsbase{$T_3$}}](ft3){};
\draw (80pt, 0pt)node[nod](4){\fsbase{$z$}};
\draw (120pt, 0pt)node[nod](5){};
\path[->] (0) edge[bend left] node[above]{\feat{p}}(1);
\path[->] (1) edge[bend left] node[above]{\feat{p}}(0);
\path[->] (2) edge[dashed] node[left,pos=0.2]{\rel{r}}(1);
\path[->] (2) edge node[above]{\feat{q}}(ft3);
\path[->] (4) edge node[below]{\feat{q}}(5);
% \draw[->] (1) .. controls (70pt,-10pt) and (60pt,-20pt) .. node[right]{\feat{q}}(ft);
\end{fs}
\end{minipage}
\begin{minipage}{0.5\textwidth}
\begin{fs}
\draw (22pt, 0pt)node[ft,minimum width=70pt,minimum height=40pt,label={180:\fsbase{$T_1,T_2$}}](ft1){};
\draw (0pt, 0pt)node[nod](0){\fsbase{$x$}};
\draw (40pt, 0pt)node[nod, label={-90:\type{t}}](1){\fsbase{$y$}};
\draw (40pt, 40pt)node[nod](2){\fsbase{$b$}};
\draw (100pt, 0pt)node[ft,minimum width=60pt,minimum height=30pt,label={0:\fsbase{$T_3$}}](ft3){};
\draw (80pt, 0pt)node[nod](4){\fsbase{$z$}};
\draw (120pt, 0pt)node[nod](5){};
\path[->] (0) edge[bend left] node[above]{\feat{p}}(1);
\path[->] (1) edge[loop left] node[left]{\feat{p}}(1);
\path[->] (2) edge[dashed] node[left,pos=0.2]{\rel{r}}(1);
\path[->] (2) edge node[above]{\feat{q}}(ft3);
\path[->] (4) edge node[below]{\feat{q}}(5);
% \draw[->] (1) .. controls (70pt,-10pt) and (60pt,-20pt) .. node[right]{\feat{q}}(ft);
\end{fs}
\end{minipage}
\caption{The two minimal models for formula \eqref{eq:abstr}}
\label{fs:abstr}
\end{figure}

One arising issue by leaving the possibility for attributes to cross wrapping boundaries is that it would require every node expected in a wrapping to be explicitly specified to belong to it (or to tell when an attribute crosses a wrapping boundary). To avoid that we add the \textbf{non-escapability constraint}:

\ex. \label{ex:non-esc} Attributes cannot cross wrapping boundaries from the inside to the outside\\
    i.e. Every entity (node or wrapping) attribute-accessible from a node in some wrapping is in the same wrapping\\
    i.e. $\forall W\in\W,v\in W,p\in\ATTR^*,~\I(p)(v)\downarrow~\to~\I(p)(v)\in[v]$
    
\noindent where $\I(p)(v)$ is defined as the canonical extension of $\I$ to attribute paths. Remark that it prevents a wrapping to be attribute-reachable from a wrapped node.

The apparent second-order capacity of wrappings is just an appearance. This comes from the fact that we cannot talk about any set of nodes, but only the ones present in the model and that they are disjoint. So a translation to First-Order Logic is possible.

\subsubsection{FOL translation}

Let us begin with the translation of the signature. The core was already considered in \cite{KallmeyerOsswald:13}: types, attributes and relations of arity $m$ are mapped to predicates of arity $1$, $2$ and $m$ respectively. Base-labels could be mapped to constant symbols, but this would require each of them to be present in any model. So, we better map them to one-place predicates. This is the same for node and wrapping variables, as they are in fact just used as labels. To avoid confusion, the FOL variables will be noted $u,w,...\in\X$. The translation of Attribute-Value Descriptions is processed while keeping the variable $u$ of evaluation.

\begin{equation}
\label{eq:trans-avd}
\begin{array}{rclcrclcrcl}
\overline{\feat{p}:\varphi}[u]&\rightsquigarrow&\exists w.~\feat{p}(u,w)\wedge\overline{\varphi}[w]&
& \overline{t}[u]&\rightsquigarrow& t(u)&
& \overline{k}[u]&\rightsquigarrow& k(u)
\end{array}
\end{equation}

\noindent and the boolean connectives just let the variable pass through.

As wrappings are full-fledged citizens of the frame model, they can be mapped to simple nodes with a special predicate $\WR$. Belonging to a wrapping is, without any difficulty, expressed by another additional $\IN(u,w)$ predicate (node $u$ is in the wrapping $w$). Hence the translation of Attribute-Value Formulas:

\begin{equation}
\label{eq:trans-avf}
\begin{array}{rcl}
\overline{k\cdot\varphi}&\rightsquigarrow& \exists u.~\neg\WR(u)\wedge(\forall w.~ \neg\IN(u,w))\wedge k(u)\wedge\overline{\varphi}[u]\\
\overline{k\cdot p\trieq l\cdot q}&\rightsquigarrow& \exists u,u',w.~k(u)\wedge l(u')\wedge p(u,w)\wedge q(u',w)\\
\overline{\rel{r}_m((k_i\cdot p_i)_{i<m})}&\rightsquigarrow& \exists (u)_{i<m},(w)_{i<m}.~(\bigwedge_{i<m} k_i(u_i)\wedge p_i(u_i,w_i))\wedge \rel{r}_m((w_i)_{i<m})\\
\overline{T:||x\cdot\varphi||}&\rightsquigarrow& \exists u,w.~\IN(u,w)\wedge T(w)\wedge x(u)\wedge\overline{\varphi}[u]\\
\end{array}
\end{equation}

\noindent where $\feat{p}_0...\feat{p}_{m-1}(u_0,u_m)$ is an abbreviation for $\exists (u_i)_{0<i<m}.~ \bigwedge_{0\leq i< m}\feat{p}_i(u_i,u_{i+1})$.

Of course, this is not sufficient to hope to deal only with frames. A lot of constraints still have to be expressed, for example, the functionality of attributes or the way $\IN$ implies $\WR$ for its second argument. This is taken into account by working in the theory of frames with wrappings satisfying the axiom schemata in \eqref{eq:theor}.

\begin{equation}
\label{eq:theor}
\begin{array}{lr}
\bigwedge_{\feat{p}}\forall u,w,w'.~\feat{p}(u,w)\wedge\feat{p}(u,w')~\to~w=w' &\text{(attribute functionality)}\\
\bigwedge_{k}\forall u,u'.~k(u,)\wedge k(u')~\to~u=u' &\text{(labelling uniqueness)}\\[2ex]
\forall u,w.~\IN(u,w)\wedge \WR(u)~\to~\bot &\text{(wrapping non-embeddedness)}\\
\forall w.~\WR(w)~\to~\exists u.~\IN(u,w) &\text{(wrapping non-emptiness)}\\
\forall u,w',w.~\IN(u,w)\wedge \IN(u,w')~\to~w=w' &\text{(wrapping disjointedness)}\\
\forall u,w.~\IN(u,w)~\to~\WR(w) &\text{($\WR$ and $\IN$ connection)}\\[2ex]
\forall u,w.~\bigvee_k\bigvee_p\exists u'.~k(u')\wedge p(u,u')\wedge (\IN(u,w)\to\IN(u',w)) &\text{(reachability constraint)}\\
\bigwedge_{\feat{p}}\forall u,u',w.~(\IN(u,w)\wedge\feat{p}(u,u'))~\to~\IN(u',w) &\text{(non-escapability constraint)}\\
\end{array}
\end{equation}

One corollary is the impossibility for wrapped nodes to point to wrappings with an attribute. The formulation of the property and its proof with only two of the above axioms is left to the reader.

The translation of models is even more straightforward and ensues directly from what was discussed above. Fig.~\ref{fs:abstr-fol} illustrates the translation of some model in FOL. The translation is correct, i.e. for every model $M$ and $\AVF$ $\chi$, $M\models\chi$ iff $\overline{M}\models\overline{\chi}$, but proof would be too long to be enclosed in this report. This could give us an upper bound for the model-checking complexity. Nevertheless, we will see that we can do much better thanks to the minimal model property.

\begin{figure}[ht]
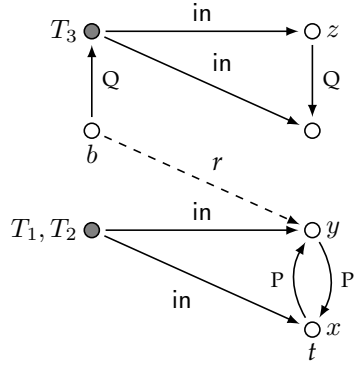

\begin{fs}
\draw (80.0, 40.0)node[nod, label={-90:$b$}](4){};
\draw (80.0, 0.0)node[nod,fill=gray,label={180:$T_1,T_2$}](ft1){};
\draw (80.0, 80.0)node[nod,fill=gray,label={180:$T_3$}](ft2){};
\draw (160.0, -40.0)node[nod, label={0:$x$}](0){};
\draw (160.0, -40.0)node[circle,label={-90:\type{t}}]{};
\draw (160.0, 0.0)node[nod,label={0:$y$}](1){};
\draw (160.0, 80.0)node[nod,label={0:$z$}](3){};
\draw (160.0, 40.0)node[nod](5){};
\path[->] (ft1) edge node[below left]{$\IN$}(0);
\path[->] (ft1) edge node[above]{$\IN$}(1);
\path[->] (0) edge[bend left] node[left]{$\feat{p}$}(1);
\path[->] (1) edge[bend left] node[right]{$\feat{p}$}(0);
\path[->] (ft2) edge node[above]{$\IN$}(3);
\path[->] (4) edge node[right]{$\feat{q}$}(ft2);
\path[->] (4) edge[dashed] node[above right]{\rel{r}}(1);
\path[->] (3) edge node[right]{$\feat{q}$}(5);
\path[->] (ft2) edge node[above right]{$\IN$}(5);
\end{fs}
\caption{Translation of one minimal model of \eqref{eq:abstr} (the one on the left of Fig.~\ref{fs:abstr}) in a FOL model. The gray nodes are the ones satisfying $\WR$.}
\label{fs:abstr-fol}
\end{figure}

\subsubsection{Homomorphism}

Extending subsumption to FSW is not involved. At first glance, it suffices to define homomorphisms with an additional mapping between wrappings, with the intuitive property that it conserves wrapping belonging, i.e. nodes belonging to some wrapping have to belong to the image wrapping. But for reasons we will explain later, considering the ability to transform a node into a wrapping is also interesting, so our system must account for that.

We say that a model $M$ subsumes a model $M'$ on the same signature, written $M\sqsubseteq M'$, if there exists a homomorphism $h=\ <\!\hV,\hW\!>$ where $\hV:V\to V'\uplus\W'$ and $\hW:\W\to\W'$ such that:

\begin{equation}
\label{eq:hom}
\begin{array}{lr}
\hV(\I(t)) \subseteq \I'(t) & \text{(types)}\\
\text{If } \I(\feat{p})(v)\downarrow \text{ then } \hV(\I(\feat{p})(v))=\I'(\feat{p})(\hV(v)) & \text{(attributes)}\\
\hV(\I(\rel{r}_m))\subseteq\I'(\rel{r}_m) & \text{(relations)}\\
\text{If } \I(b)\downarrow \text{ then } \hV(\I(b))=\I'(b)& \text{(base-labels)}\\
\text{If } v\in W \text{ then } \hV(v)\in \hW(W)& \text{(wrappings)}\\
\text{If } g(x)\downarrow \text{ then } \hV (g(x))=g'(x)& \text{(assignments)}
\end{array}
\end{equation}

\noindent for every $v\in V$, $t\in\TYPE$, $\feat{p}\in\ATTR$, $\rel{r}_m\in\REL_m$, $b\in\NNAME$ and $W\in\W$. By abuse of notation we can simply use $h$ instead of $\hV$ or $\hW$.

A model $M'$ is isomorphic to $M$ if there exists a homomorphism $\sigma:M\to M'$ (called isomorphism) such that $\codom~ \sigma_V\subseteq V'$, being bijective and such that $\sigma^{-1}$ is a homomorphism. Even with such a definition, we keep the useful property \ref{th:order} (in appendix \ref{ap:proof}) that $\sqsubseteq$ induces an ordering up to isomorphism. This is partly due to the property \ref{th:unique} stating that there is at most one homomorphism between two models. Moreover, we can still define unification as a least upper bound (lub), noted $M\sqcup M'$ when it exists.

The potential variable renaming might be taken into account. We say that $M$ $\alpha$-subsumes $M'$, written $M\sqsubseteq_{\alpha}M'$ if there is a homomorphism $h:M\to M'$ (called $\alpha$-homomorphism) having moreover a submapping $\hvar:\dom~\gN\uplus\dom~\gW\hookrightarrow\dom~\gN'\uplus\dom~\gW'$ verifying the property \eqref{eq:hom-alpha} instead of (assignments) of \eqref{eq:hom}.

\begin{equation}
\label{eq:hom-alpha}
\begin{array}{lr}
\text{If } g(x)\downarrow \text{ then } \hV(g(x))=g'(\hvar(x))& ~~~~~~~~~~ \text{(}\alpha\text{-assignments)}
\end{array}
\end{equation}

\noindent with $x\in\NVAR\uplus\WVAR$. Remark that the definition of $\hvar$ is just a trick to state concisely the injectivity on all the domain set. In fact no wrapping variable can be mapped into a node variable, otherwise, it would contradict the definition of $\hW$.

%example?

A model $M'$ is \boldmath$\alpha$\unboldmath\textbf{-equivalent} to $M$ if there exists some $\alpha$-homomorphism $\sigma:M\to M'$ (called $\alpha$-isomorphism) with $\codom~ \sigma_V\subseteq V'$, $\svar(\dom~\gN)\subseteq \NVAR$, being bijective and such that $\sigma^{-1}$ is an $\alpha$-homomorphism. Isomorphism and $\alpha$-equivalence are clearly equivalence relations. The great property is that mutual subsumption ($\alpha$-subsumption resp.) is equivalent to isomorphism ($\alpha$-equivalence resp.). Therefore we can extend it to define subsumption between equivalence classes, especially $\alpha$-equivalence classes. If $h:M_1\to M_2$ is an $\alpha$-homomorphism and $\sigma_1:M_1\to M_1'$ and $\sigma_2:M_2\to M_2'$ are $\alpha$-isomorphisms, then by composing we can create an $\alpha$-homomorphism $h'\eqdef\sigma_2\circ h\circ \sigma_1^{-1}:M_1'\to M_2'$. This allows us to claim that the $\alpha$-equivalence class $[M_1]_{\alpha}$ of $M_1$ $\alpha$-subsumes that of $M_2$: $[M_1]_{\alpha}\sqsubseteq_{\alpha}[M_2]_{\alpha}$.

The interesting consequence is that $\alpha$-equivalence classes also form an ordering for subsumption. However, variable renaming prevents some couple of models to have any least upper bound, contrary to the intuition. This raises a big issue because this was the way we used to define unification. Fig.~\ref{fs:lub} in appendix \ref{ap:fig} shows some conflicting example. Moreover, it would in fact also be a strange idea to consider the $\models$ relation up to variable renaming. This system is a bit too powerful and does not generate the wanted behaviour. Consequently $\alpha$-homomorphisms will not be considered any longer. The use of variables as labels therefore only brings a conceptual insight.

%%%
\subsection{Possible theories and interpretations}

\subsubsection{Representation of determiners}

Now that we possess a logical tool able to group nodes, a theory of quantifiers in frames can be built. Several approaches are possible. Quantifiers are indeed multifaceted. They have logical meanings but which are very sensitive to the syntax (and the context). We do not have the ambition to capture all the subtle meanings generated by quantifiers. Nevertheless, two distinct aspects goals compete here.

The first incentive is to provide a cognitive-adequate representation of determiners. However, no work on the mental representation of non-numerical quantification is known by us. A lot of semantic models propose explanations for precise lexical units, but no consensus has been met. In spite of this, there are heuristics that we can follow. Quantification is for sure not considered an individual or an event, namely not as strongly ``pictorial'' as this kind of entities. So using nodes to represent it is dubious.

That is why a first proposal is to utilize a relation. By trying to stick to previous attempts, let us construe this relation as a relationship between two sets, the restrictor and the nuclear scope. As stated before both may be represented by wrappings, which interpretation is the set of actual realizations (called instances) of their contents. But seeing sets as one-place predicates, some solution has to be found to express that quantified variable. Actual variable sharing cannot function. Indeed a variable of $\NVAR$ labels only one node, so it would be impossible to make two different wrappings with both having a node labelled by the same variable. But as wrapping boundaries are a bit permeable, this may be used to connect nodes of different wrappings.

In Fig.~\ref{fs:bark-dog} on the left, the depiction of the frame for \ref{ex:bark-dog} in that theory highlights the connection between the \type{dog} node and the \feat{agent} of the \type{barking} node. This means that they represent the same individual in the quantification. Stated differently, \textit{Every dog is a barker}. Formally in AVL, if $\chi$ is a set of $\AVD$ (or directly a fact-set, see section \ref{sub:fact-set}) describing the restrictor wrapping, $\xi$ for the nuclear scope wrapping, and $\rel{every}(k\cdot p,l\cdot q)$ holds, then the associated formula is
\begin{equation}
\label{eq:wrap-interpr}
\forall u.~(\exists \bar{x}.~(\chi\wedge k\cdot p\trieq u))~\to~ (\exists \bar{y}.~(\xi\wedge l\cdot q\trieq u))
\end{equation}
where $\bar{x}$ ($\bar{y}$ resp.) is the set of free variables in $\chi$ (in $\xi$ resp.) and $u$ is a fresh variable. This should be read ``Every individual enjoying the property $\chi$ is an individual enjoying the property $\xi$''. 
This might give another support for the use of variables as labels in wrappings. It is also closely related to how we might take advantage of the structure for the discourse level (see section \ref{sub:inst}).

\begin{figure}[ht]
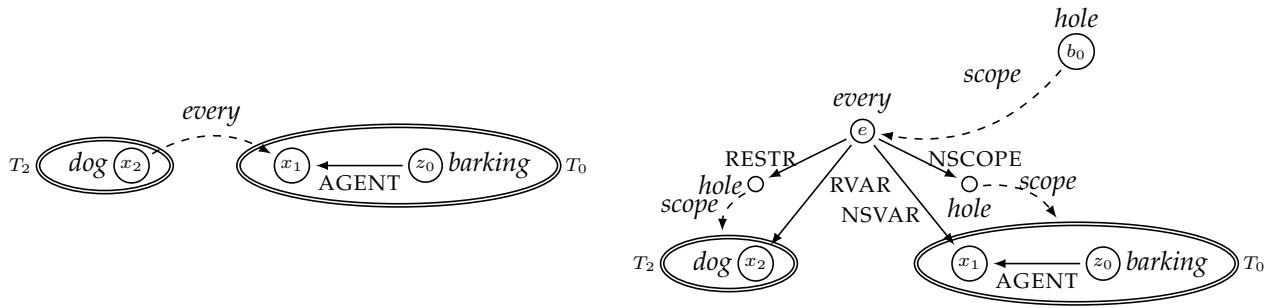

\begin{minipage}{0.5\textwidth}
\begin{fs}
\draw (-10pt, 0pt)node[ft,minimum width=50pt,minimum height=20pt,label={180:\fsbase{$T_2$}}](ft1){};
\begin{scope}
    \draw (0pt, 0pt)node[nod,label={180:\type{dog}}](11){\fsbase{$x_2$}};
\end{scope};
\draw (100pt, 0pt)node[ft,minimum width=120pt,minimum height=30pt,label={0:\fsbase{$T_0$}}](ft0){};
\begin{scope}[xshift=60]
    \draw (0pt, 0pt)node[nod](01){\fsbase{$x_1$}};
    \draw (50pt, 0pt)node[nod,label={0:\type{barking}}](00){\fsbase{$z_0$}};
    \path[->] (00) edge node[below]{\feat{agent}}(01);
\end{scope};
\path[->] (11) edge[bend left,dashed] node[above]{$\rel{every}$}(01);
\end{fs}
\end{minipage}
\begin{minipage}{0.5\textwidth}
\begin{fs}
\draw (-10pt, 0pt)node[ft,minimum width=50pt,minimum height=20pt,label={180:\fsbase{$T_2$}}](ft1){};
\begin{scope}
    \draw (0pt, 0pt)node[nod, label={180:\type{dog}}](11){\fsbase{$x_2$}};
\end{scope};

\draw (120pt, 0pt)node[ft,minimum width=120pt,minimum height=30pt,label={0:\fsbase{$T_0$}}](ft0){};
\begin{scope}[xshift=80]
    \draw (0pt, 0pt)node[nod](01){\fsbase{$x_1$}};
    \draw (50pt, 0pt)node[nod,label={0:\type{barking}}](00){\fsbase{$z_0$}};
    \path[->] (00) edge node[below]{\feat{agent}}(01);
\end{scope};

\draw (40pt, 50pt)node[nod,label={90:\type{every}}](ev){\fsbase{$e$}};
\draw (0pt, 30pt)node[nod,label={180:\type{hole}}](evr){};
\draw (80pt, 30pt)node[nod,label={-90:\type{hole}}](evns){};
\path[->] (ev) edge node[left]{$\feat{restr}$}(evr);
\path[->] (evr) edge[bend right,dashed] node[left]{$\type{scope}$}(ft1);
\path[->] (ev) edge node[right,pos=0.4]{$\feat{rvar}$}(11);
\path[->] (ev) edge node[right]{$\feat{nscope}$}(evns);
\path[->] (evns) edge[bend left,dashed] node[right,pos=0.3]{$\type{scope}$}(ft0);
\path[->] (ev) edge node[left,pos=0.7]{$\feat{nsvar}$}(01);
\draw (120pt, 80pt)node[nod,label={90:\type{hole}}](h){\fsbase{$b_0$}};
\path[->] (h) edge[bend left,dashed] node[above left,pos=0.2]{$\type{scope}$}(ev);
\end{fs}
\end{minipage}
\caption{Two frames for \ref{ex:bark-dog}. On the left, following a cognitive-friendly approach. On the right, following a syntactic approach.}
\label{fs:bark-dog}
\end{figure}

This theory leads to underspecified representations in the sense that two determiner relations could point to different nodes of the same wrapping without specifying which one outscopes the other. However, a solved (i.e. unambiguous) form is not representable. This stems from the impossibility of relations to be entities which can themselves have ingoing edges (e.g. from another determiner). Consequently, granting more powerful entities to determiners is a good strategy.

The second theory, named \textbf{Quantified Complexes} takes this remark into account, and posits a more syntactic view of quantification. Close to type-logical formulas, a determiner incarnates a full-fledged subframe representing its term: the root node with its type, a restrictor node reachable by \feat{restr} and a nuclear scope node reachable by \feat{nscope}. Reminiscent of Hole Semantics, the latter are so-called holes. This means that they can receive another subformula (or here subframe). To mimic Dominance Graphs dominance constraints are introduced as relations \rel{scope} from a hole to a determiner (or more generally logical) node or wrapping. Thus wrappings obtain an interpretation of subformulas and make possible the cut of subframes into formulas\footnote{This is especially important to derive correct every unambiguous reading}. But as soon as the scope points to the entire wrapping predicate, variable sharing has to be spotted another way. This is simply done with two additional attributes \feat{rvar} (restrictor variable) and \feat{nsvar} (nuclear scope variable). All wrappings are stuck together by logical subframes, hence the denomination of Quantified Complex. Fig.~\ref{fs:bark-dog} on the right displays an example for \ref{ex:bark-dog}.

\subsubsection{Interpretation of wrappings}

Albeit formally clear how quantifiers relate to predicates, nothing was claimed about the relationship between wrappings and current linguistic theories. There again, two distinct approaches make different predictions.

The first interpretation is the \textbf{Frame Types} (FT) of \cite{BaloghOsswald:19}. They introduce it to deduce automatically the type of a subevent of a global event. For example, a particular state (given by a point in time) in a scalar change (e.g. the process of becoming dryer) has to be an instance of the Frame Type that the scalar change is an instance of. Their formalism distances itself from ours because they use any $\AVD$ to be the name (i.e. the label) of FT nodes. This has the drawback to be unable to specify multirooted Feature Structures in wrappings. But as they suppose that the unique root is the center node, it makes sense to use only edges between FTs, as opposed to our \feat{rvar} and \feat{nsvar}. At least, our system is more expressive than theirs.

The main statement with a FT interpretation is that the content of a wrapping determines it, i.e. there cannot exist two wrappings of the same content. The idea is to be in the capacity of putting some type hierarchy in the model, without specifying it explicitly. The \rel{sub} relation between relative centers tells that the individual of the first FT has as subtype the individual of the second FT. To me, this can be extended as a particular case of a subsumption relation (in the other direction). This could also include the type hierarchy itself. An illustration can be found in Fig.~\ref{fs:ft}.

\begin{figure}[ht]
\begin{fs}
\draw (-40pt, 65pt)node[ft,minimum width=70pt,minimum height=50pt](ft3){};
\begin{scope}[xshift=-20,yshift=60]
    \draw (0pt, 0pt)node[nod,label={90:\type{dog}}](31){};
    \draw (-40pt, 0pt)node[nod,label={90:\type{big}}](32){};
    \path[->] (31) edge node[below]{\feat{size}}(32);
\end{scope};
\draw (10pt, 0pt)node[ft,minimum width=40pt,minimum height=20pt](ft1){};
\begin{scope}
    \draw (0pt, 0pt)node[nod,label={0:\type{dog}}](11){};
\end{scope};
\draw (120pt, 0pt)node[ft,minimum width=120pt,minimum height=30pt](ft0){};
\begin{scope}[xshift=80]
    \draw (0pt, 0pt)node[nod](01){};
    \draw (50pt, 0pt)node[nod,label={0:\type{barking}}](00){};
    \path[->] (00) edge node[below]{\feat{agent}}(01);
\end{scope};
\draw (75pt, 60pt)node[ft,minimum width=140pt,minimum height=30pt](ft2){};
\begin{scope}[xshift=40,yshift=60]
    \draw (0pt, 0pt)node[nod,label={180:\type{dog}}](21){};
    \draw (50pt, 0pt)node[nod,label={0:\type{barking}}](20){};
    \path[->] (20) edge node[below]{\feat{agent}}(21);
\end{scope};
\draw (60pt,100pt)node(p){...};
\draw (-100pt, -20pt)node[nod,label={-30:\type{dog}}](1){\fsbase{$b$}};
\draw (-140pt, 0pt)node[nod,label={180:\type{big}}](2){};
\draw (-140pt, -40pt)node[nod,label={180:\type{Bull}}](3){};
\path[->] (1) edge node[below left]{\feat{size}}(2);
\path[->] (1) edge node[below right]{\feat{name}}(3);
\path[<-] (00) edge[dashed] node[right]{$\rel{sub}$}(20);
\path[<-] (01) edge[dashed] node[right]{$\rel{sub}$}(21);
\path[<-] (11) edge[dashed] node[right]{$\rel{sub}$}(21);
\path[<-] (11) edge[dashed] node[left]{$\rel{sub}$}(31);
\path[<-] (20) edge[dashed] node[right,pos=0.7]{$\rel{sub}$}(p);
\path[->] (1) edge[dashed] node[above left]{$\rel{inst}$}(31);
\path[->] (1) edge[dashed] node[below right]{$\rel{inst}$}(11);
\path[->] (2) edge[dashed] node[above left]{$\rel{inst}$}(32);
\end{fs}
\caption{Illustration of the Frame Type theory. In the upper-right corner, some FTs are shown with their respective \rel{sub} relations. In the bottom-left corner, a part of an instance (here the knowledge that there exists a big dog named Bull) is connected with its corresponding FTs. Labels are not represented, except in the instance where base-labels spot individuals of the outside world.}
\label{fs:ft}
\end{figure}

The \rel{inst} relation tells that the first argument is an instance of the second (which should be a wrapped node). The instance, unlike the representation of a sentence, is the part of a frame describing the world knowledge, the current context. With this idea of \rel{inst} relation the point was to update automatically the instance, knowing of what FT a node was, but also maybe conversely to guess the FT of a subframe; the whole using unordered local AVL constraints (see section \ref{subsub:constraints}). While this looked possible on simple examples, this system fails to capture the complexity of quantified sentences, especially because that phenomenon is not local at all and very sensitive to the overall environment. Another method is discussed in section \ref{sub:inst}.

Furthermore, FTs figure out to be useful to declare a general kind of sets. By enabling some nodes to have a \type{collection} type, an attribute \feat{ctype} (collection type) could go to a FT to specify the kind of its elements by an intentional definition. This would also allow extensional definitions of sets. This could solve the recurring problem of frames to fail to represent correctly conjunction.

Nonetheless, representation issues might arise at the discourse level. If Quantified Complexes are put up on FTs, non-related occurrences of the same predicate would make their common FT receive two non-related quantifier attributes. The syntactic theory could differentiate this by viewing the main hole $b_0$ scoping over every determiner involved in a sentence as an anchor of the Quantified Complex for that sentence (so, differentiating sentences). But this does not seem satisfactory and is not elegant anyway, hence an alternative interpretation.

The second interpretation of wrappings is called \textbf{Uninstantiated Frames} (UF). It posits less internal structure (no \rel{sub} relation) and authorizes different wrappings of the same content. The idea is just that wrappings are subframes able to be instantiated. This better connects to a syntactic representation, where wrappings can be just saw as a (set of) formula(s) describing a predicate. Same wrappings might be still employed in order to account for anaphoric phenomena.

%%%%%
\section{From the lexical to the discourse representation}

%%%
\subsection{Lexical minimal models}
\label{sub:fact-set}

In an implementation of frames, the lexicon would be a list of couples of an anchored elementary tree and a semantic model. But for compactness and convenience, it would be rather more interesting to specify the latter with AVL formulas. So, a minimal model must be computed to finalize the lexicon constitution. Our goal is to extend the work of \cite{Hegner:93} of minimal FS model to models with variables and wrappings. The paper only considered Horn formulas (i.e. no positive disjunction) but proved that this is the larger class able to be treated in polynomial time. However, disjunctions appear to be relevant in frames, may it be to specify faster several meanings of a same elementary tree, for properties which can be distinguished only at the semantic sentence formation, or for words which might be always inherently ambiguous (like dot objects \cite{BabonnaudKallmeyerOsswald:16}). So, expanding the tool to accept several minimal models is an expected milestone.

To begin with, let us extend the $\models$ notion. Conventionally, an $\AVF$ $\chi$ entails $\xi$, written $\chi\models\xi$, if for every model $M$, $M\models\chi$ implies $M\models\xi$. They are said to be equivalent if the relation holds in both direction, abbreviated by $\chi\iseq\xi$. Similarly, we define $\chi$ to entail a model $M$ if for every model $M'$, $M'\models\chi$ implies $M\sqsubseteq M'$ (what is also written $M'\models M$). So $M$ is a minimal model for $\chi$ if and only if $M\iseq\chi$.

The extension to model sets follows the intuition. Set $\M$ a finite set of models. In the following, only finite sets are considered. Define $\M\models\chi$ by: for every $M\in\M$, $M\models\chi$. The reciprocal definition should draw more attention: $\chi\models\M$ iff for every model $M'$, there exists $M\in\M$ such that $M'\models M$. A \textbf{Minimal Model Set} (MMS) $\M$ is therefore as model set verifying $\M\iseq\chi$. We say that it is \textbf{canonical} if moreover, it is an antichain for $\sqsubseteq$. Canonical Minimal Model Sets are the best to represent formulas because they are unique, as proven in theorem \ref{th:cmms} in appendix \ref{ap:proof}. Note that $\models$ is not ``completely'' transitive because of negation in $\AVF$.

The first step of the Minimal Model Set construction is transforming $\chi$ into a Conjunctive Normal Form (CNF) $\xi$. To do so, we have to define atoms, that have a slightly different syntax and semantics.

\begin{equation}
\label{eq:atom}
\ATOMS::=k\bl p:\top~|~k\bl p:t~|~k\bl p\trieq l\bl p~|~\rel{r}_m((k_i\bl p_i)_{i<m})~|~\INWR(k)~|~\IN(k,T)~|~\WR(k)
\end{equation}

\begin{equation}
\label{eq:atom-sem}
\begin{array}{ll}
F,g\models k\bl p:\top& \text{iff } \I_g(k,p)\downarrow\\
F,g\models k\bl p:t& \text{iff }  \I_g(k,p)\in\I(t)\\
F,g\models \INWR(k)& \text{iff }  \I_g(k)\in V\wedge [k]\in\W\\
F,g\models \IN(k,T)& \text{iff } \I_g(k)\in\I_g(T)\\
F,g\models \WR(k)& \text{iff } \I_g(k)\in\W\\
\end{array}
\end{equation}

The other symbols are interpreted the same. This flat semantics allows us to spot clearly what are the facts holding (or not) in $\chi$. The Conjunctive Normal Form $\xi$ can be viewed as a set of sets $C$ (called clauses) of the form $C=\Sigma\Rightarrow \Delta$, with $\Sigma$ interpreted conjunctively and $\Delta$ disjunctively ($C=(\bigwedge_{\kappa\in\Sigma}\kappa) \Rightarrow (\bigvee_{\rho\in \Delta}\rho))$. It is in Horn form iff $|\Delta|\leq 1$. To connect to $\models$, we can also write $\xi=\bigwedge_{C\in\xi}C$. The translation into an equivalent CNF is not explained in details here, given that it is common practise. Moreover, the rewriting rules are very intuitive: the boolean connectives are pulled outside $\AVD$, the negations pushed to the atoms, and the syntactic sugar $k$ is replaced by $\trieq$. We only give the rules for the two main constructors:

\begin{equation}
\label{eq:rewriting}
k\cdot\varphi~\rightsquigarrow \neg\INWR(k)\wedge\neg\WR(k)\wedge k\bl\varphi~~~~~~~~~~~~~~
T:||x\cdot\varphi||~\rightsquigarrow \IN(x,T)\wedge x\bl\varphi
\end{equation}

Translation of \eqref{eq:abstr} is given in \eqref{eq:abstr-cnf}.

\begin{equation}
\label{eq:abstr-cnf}
\begin{array}{c}
\IN(x,T_1)\wedge(x\bl\feat{pp}\trieq x\bl\varepsilon\vee x\bl\feat{pp}\trieq y\bl\varepsilon)\wedge \neg\INWR(b)\wedge\neg\WR(b)\wedge b\bl\feat{q}\trieq T_3\bl\varepsilon\\
\wedge\IN(y,T_2)\wedge y\bl\feat{p}:t\wedge\rel{r}(b,y)\wedge x\bl\feat{p}\trieq y\bl\varepsilon\wedge\IN(z,T_3)\wedge z\bl\feat{q}:\top
\end{array}
\end{equation}

The second step consists in creating a (set of) \textbf{fact-set}(s).
The CNF with this special clause shape allows us to use the technique used by Hegner for Horn constraints. This is just the general algorithm for HornSAT, but instead of keeping the list of true propositions, we keep a list of facts (i.e. atoms) that hold: the fact-set. But given disjunction, several fact-sets can be necessary to account for possibly several minimal models.

Formally the saturation algorithm (in appendix \ref{ap:satur}) builds an equivalent finite set $\U$ of finite fact-sets $U\subseteq\ATOMS$ which should be interpreted as $\U=\bigvee_{U\in\U}(\bigwedge_{\beta\in U}\beta)$. Each fact-set $U$ represents a primitive syntactic version of a model $M$. The contradictory fact-sets are discarded. Equation \eqref{eq:abstr-fact} in appendix \ref{ap:fig} displays a sufficient part of the fact-sets created from \eqref{eq:abstr-cnf}.

As cycles may occur, we follow the proposal of \cite{Hegner:93} to limit the fact-sets to atoms with a maximum path length. But as we do not know the number of nodes of the final model, we can only take an approximated upper bound $C_0\eqdef L(\xi)+\mea{\xi}$, where $\mea{\xi}$ is the number of symbols in $\xi$ that are attributes. If $M$ is the final model, $C_0\geq |V|$ because each label or attribute creates at most one new node. So limiting the number of attribute symbols in any atom by $C_0$ ensures a finite set, because the deduction procedure (rules in \eqref{eq:deduce}) only adds a finite number of atoms (by induction on the rules).

Moreover, each fact-set $U$ is \boldmath$C_0$\unboldmath\textbf{-saturated}, i.e. for all $U'\subseteq U$ and $U'\models\beta'$ with $\mea{\beta'}\leq C_0$, then $\beta'\in U$. This allows us to have the appealing property \ref{th:reduc} (appendix \ref{ap:proof}) stating that $U\models U'$ iff $U'\subseteq U$. Consequently, the third step is to make sure that $\U$ is an antichain for $\subseteq$, which gives us a canonical fact-set $\U_{\mathsf{can}}$. This could be done naively by checking every pair. A better algorithm surely exists, especially with the knowledge of how $\U$ was created, but this goes beyond the scope of this research.

The fourth and last step consists in constructing a model $M$ out of each fact-set $U$. The process is solved by the following high-level algorithm. This yields a canonical Minimal Model Set.

\begin{itemize}
    \item First, create a node for every label present in $U$. Then, create paths with new nodes, preserving the functionality of attributes. It means: to process $k\cdot p:\top$, explore $p$ from the node for $k$ and whenever a node lacks, create a new one with the correct attribute to it.
    \item Then process the $\trieq$ atoms by merging the respective nodes.
    \item Then add types and relations.
    \item Finally wrappings are computed with a set data structure, for example Tarjan's. To begin with, assign a set $\mathcal{T}$ of wrapping labels to every $\INWR$-typed node $v$ labelled by $L$ (a set of labels), such that $\mathcal{T}=\{T\in\WVAR~|~\exists k\in L,~\IN(k,T)\in U\}$. Then, explore the graph by putting every node $v$ reachable with an attribute path from $w$ in the same wrapping set as $w$ (non-escapability constraint). 
\end{itemize}

No conflict should occur thanks to saturation. Remark that, without any particular constraint, nodes which are attribute-accessible from nodes in the complement set $\comp$ (i.e. from non-wrapped and non-wrapping nodes) also belong to the complement set. Indeed, belonging to a wrapping would be strictly more informative w.r.t. subsumption and subsumption prevents wrapped nodes to be unwrapped.

Given that $\U_{\mathsf{can}}$ is canonical, so is $\M$. But the anti-chain processing could be also performed directly on models, especially because of property \ref{th:unique}.

The complexity of the whole process is clearly not polynomial. Conversion into CNF is exponential in the size of connectives in the worst case. The saturation process is exponential in the number of non-Horn clauses if we count the deduction as atomic. The complexity of the deduction has not been investigated yet, but it seems to be polynomial with a good data structure (e.g. a dictionary to be able to use the typing of deduction rules to find rapidly all triggerable rules). Finally, the model construction is polynomial in the size of the fact-set.

This gives us a faster way to model check $M$ on $\chi$, by constructing the minimal model set $\M$ of $\chi$ and checking if any $M'\in\M$ subsumes $M$. If there is no disjunction and no negation in $\chi$ this problem belongs to P.

%%%
\subsection{Syntax-Semantics Interface}

\subsubsection{Syntax derivation and unification}

Following the framework of \cite{KallmeyerOsswald:13} we use LTAGs with features to parse sentences. Let us only explain the derivations for the syntactic theory of Quantified Complexes, the semantic theory requires fewer details. The choice of elementary tree and adequate syntax-semantics features is the same as in \cite{KallmeyerRichter:14}. Only a few changes are made to account for the differences with that modelling.

Determiners are as usual modelled by an auxiliary tree adjoining on the NP node of an initial tree anchoring a noun (or noun phrase). This NP then substitutes to the NP node of a verb (or preposition), so as its subject or object. The feature \feat{i} represents the individual making the main contribution of the NP phrase. We only study cases where we can analyze it so (i.e. no conjunction \textit{and, but,...}), for simplicity. This individual corresponds to the restrictor (or nuclear scope) variable of the quantifier. Scope is given by the predicate expected from the verb (or preposition) argument. It is transmitted by the feature \feat{p}. On the verb, it stands for the considered action or property. On the noun, it stands for the type of the individual but also potential adjectival adjoints. Therefore they have to connect to wrappings. The maximum scope attribute \feat{m} is there to help the formation of dominance edges. It plays a critical role for more complex sentences, enabling fundamental scope distinction between relative clauses and other adjuncts (see Fig;~\ref{fs:donkey} in appendix \ref{ap:fig}). Fig.~\ref{tag:bark-dog} depicts the derivation of \ref{ex:bark-dog}, giving the final frame in Fig.~\ref{fs:bark-dog} on the right.

\begin{figure}[ht]
\begin{fs}
\begin{scope}[shift={(-100pt,-60pt)}]
\Tree 
[.\node(np2){NP$_{\text{[{\sc i}:$x_2$,{\sc p}:$T_2$]}}$}; [.N \type{dog} ] ]
\end{scope}
\begin{scope}[shift={(-100pt,-130pt)}]
\draw (0,0) node {$T_2:||x_2\cdot dog||$};
\end{scope}
\begin{scope}
\Tree 
[.S \node(np0){NP$^{\text{[{\sc i}:$x_1$,{\sc p}:$T_0$,{\sc m}:$b$]}}$}; [.VP [.V \textit{barks} ] ] ]
\end{scope}
\begin{scope}[shift={(20pt,-110pt)}]
\draw (0,0) node {\begin{tabular}{l}
$T_0:||z_0\cdot(barking\wedge\feat{agent}:x_1)||$\\
$\wedge b\cdot \type{hole}$
\end{tabular}};
\end{scope}
\begin{scope}[shift={(-200pt,-20pt)}]
\Tree 
[. \node(np3){NP$_{\text{[{\sc i}:$y_1$,{\sc p}:$S_0$,{\sc m}:$l$]}}$}; [.D $every$ ] [.NP*$^{\text{[{\sc i}:$y_2$,{\sc p}:$S_2$]}}$ ] ]
\end{scope}
\begin{scope}[shift={(-230pt,-110pt)}]
\draw (0,0) node {\begin{tabular}{ll}
$e\cdot($&$every\wedge$\\
&$\feat{rvar}:y_2\wedge\feat{nsvar}:y_1\wedge$\\
&$\feat{restr}:hole\wedge\feat{nscope}:hole~)$\\
$\wedge$&$\type{scope}(e\cdot\feat{restr},S_2)\wedge\type{scope}(l,e)$\\
$\wedge$&$\type{scope}(e\cdot\feat{nscope},S_0)$\end{tabular}};
\end{scope}
\draw[adj,bend right] (np2) to (np0);
\draw[adj,bend left] (np3) to (np2);
\end{fs}
\caption{\label{tag:bark-dog} Derivation of \ref{ex:bark-dog}}
\end{figure}

The derivation triggers FS Value unifications gathered in a set $E$ of equality atoms. If each elementary tree $i<m$ is associated with a Canonical Minimal Model Set $\M_i$, the Canonical Minimal Model Set of the sentence is the set $\M$ of the existing $M_E\sqcup\bigsquplus_{i<m}M_i$ for any $(M_i)_{i<m}\in\prod_{i<m}\M_i$, where $M_E$ is the model produced out of the fact-set $E$. Remark that as each word model has distinct variables, the unification between them just amounts to juxtaposition ($\squplus$). Each element of $\M$ is a possible meaning of the utterance. But to be able to sort out the semantic deviant meanings, we need to specify additional semantic constraints.

\subsubsection{Constraints}
\label{subsub:constraints}

\textbf{Constraints} are properties that force the model to have a systematic and felicitous semantic structure. Already introduced in \cite{KallmeyerOsswald:13} they take the form of universally quantified one-place AVL predicates. We use here the $\bl$ notation to refer to every node, no matter wrapped or not. The first use is to encode the type hierarchy, e.g. $\forall x.~x\bl \type{cat}\to x\bl\type{animal}$. This could be viewed as a particular case of cascades. The latter are ontological rules on subframes, e.g. $\forall x.~x\bl\type{activity}\to x\bl\feat{actor}:\top$ or $\forall x.~x\bl(\type{activity}\wedge\type{motion})\to x\bl\feat{actor}\trieq x\bl\feat{mover}$. Thanks to them it is possible not to specify the \feat{actor} of an \type{activity} when it is unknown, while still being sure it is present in the model. Indeed, it could be implicit, and therefore co-referred to later in the discourse, so it needs to be there. Moreover, it provides a better decomposition of frames, what helps to be more systematic and prevents redundancy. The second main supply of constraints is to specify deviant meanings by incompatibility rules, like $\forall x.~x\bl(\type{entity}\wedge\type{substance})\to\bot$(e.g. a \type{cat} cannot be \type{lava}).

Constraints are specified in Horn clauses. As long it does not entail node creation, they just have to be triggered at most once on every node. So incorporating these constraints to the unification procedure is tractable.

The Quantified Complex extension comprises the same kinds of constraints. The type hierarchy of logical nodes (mainly devoted for function words) in Fig.~\ref{hier} together with the incompatibility constraints $\forall x.~x\bl (\type{ext}\wedge \type{logical})\to\bot$, where \type{ext} (extension\footnote{The word ``extension'' is not used here as in model-theoretic semantics to indicate a set of individuals, but rather as a synonym for instance}) is the most general type of nodes in the instance, ensures that nodes of the instance and nodes used in Quantifier Complexes will not merge. The type \type{conj} is used for semantic conjunction (a case of connective \type{conn}) but which can lead to quantification in subphrases (e.g. relative clauses), like the complex example of Fig~\ref{fs:donkey}.

\begin{figure}[ht]
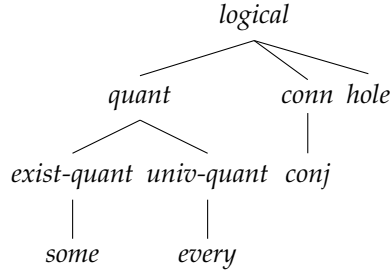

\begin{fs}
\begin{scope}
\Tree[.\type{logical} [.\type{quant}  [.\type{exist-quant} \type{some} ]
                            [.\type{univ-quant} \type{every}  ] ]
                [.\type{conn} \type{conj} ]
                \type{hole} ]
\end{scope}
\end{fs}
\caption{Part of a quantifier type hierarchy. In this Hase diagram an edge from a lower node to a higher one indicates an implication constraint from the first to the second.}
\label{hier}
\end{figure}

Cascades can also be stated. Equation \eqref{eq:casc} provides a sample of relevant formulas. But some constraints, like the transitivity of dominance edges \type{scope} do not fall under the format of one-place Horn-predicates. Nevertheless, they do not seem to raise any issues, so they may be also added. Last but not least, some structure might be very useful to avoid weird behaviours. As homomorphisms allow the transformation of nodes into wrappings, every such mapping would not be relevant. Looking back to Fig.~\ref{fs:bark-dog} on the right, the only interesting ``wrappingization'' is only these of \type{hole} nodes, i.e. the \feat{restr} or \feat{nscope}. It corresponds to the solving of the underspecified representation. Permitting only this phenomenon would require a schema: $\bigwedge_{\type{t}\neq\type{hole}}\forall T.~(\WR(T)\wedge T\bl\type{t})\to\bot$ though.

\begin{equation}
\label{eq:casc}
\begin{array}{l}
\forall e.~e\bl (\type{quant}\wedge\feat{restr}:\top)~\to~e\bl\feat{restr}:hole\\
\forall e.~e\bl (\type{quant}\wedge\feat{nscope}:\top)~\to~e\bl\feat{nscope}:hole\\
\end{array}
\end{equation}

However, constraints are not adequate to solve scope ambiguities alone because of their local character. To do so we need an entirely different process.

%%%
\subsection{Interaction with the instance}
\label{sub:inst}

Our final goal is to be able to connect the sentential meaning with the current knowledge of the world. This brings us beyond the sentential level, to the discourse level. While a sentence is represented, as discussed so far, by a Quantified Complex, the current world (the context) is just represented as usual (non-extended) frames. It is called the instance. However, it does not put up with ambiguity, i.e. it is unequivocal. So the underspecified representation of a sentence has to let us generate every possible reading of it.

Given that Quantified Complexes are very syntactic-like and close to Dominance Graphs, the solution is to utilize directly the \textbf{graph solver} of \cite{KollerThater:05}. The idea is to pick non-deterministically free fragments (nodes not having ingoing scope relations) and attach the recursively solved parts of the Weakly Connected Components created by the erasing of the selected fragment to its arguments (\type{hole} nodes). This forms new frames where every hole is mapped to a logical node or a wrapping, respecting scope constraints. Given that we would like this process to be monotonic w.r.t. $\sqsubseteq$, it explains why we allow nodes to be mapped to wrappings in homomorphisms. Algorithm \ref{algo:solver} explains the solving procedure. We refer to \cite{KollerThater:05} for further details on how this algorithm works and especially in which cases. Especially with donkey sentences (see footnote \ref{fn:donkey}), it fails to give a solution whereas our underspecified representation has the advantage to render this phenomenon properly, as in Fig.~\ref{fs:donkey}. See Fig.~\ref{fs:stroking} for a simple example with \ref{ex:stroking}. Remark that every \type{hole} node is map

\begin{figure}[ht]
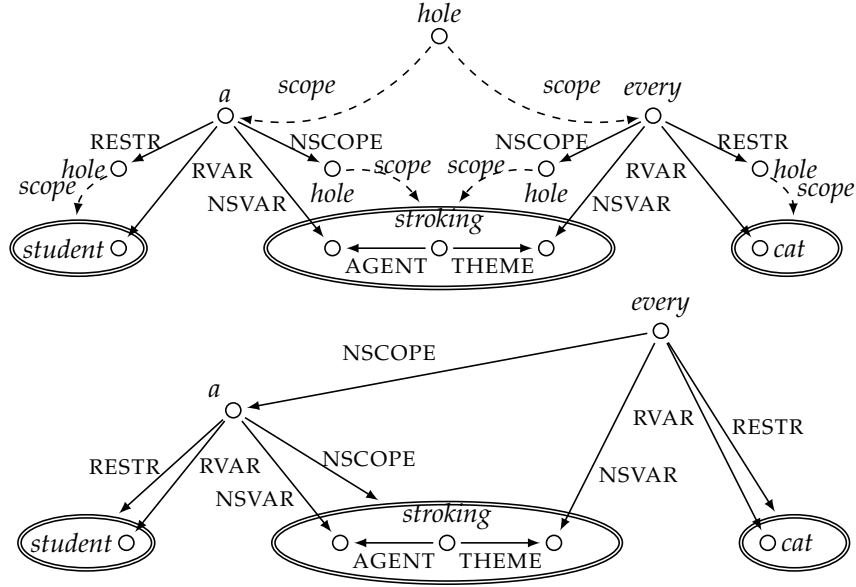

\begin{minipage}{\textwidth}
\begin{fs}
\draw (-15pt, 0pt)node[ft,minimum width=50pt,minimum height=20pt](ft1){};
\begin{scope}
    \draw (0pt, 0pt)node[nod,label={180:\type{student}}](11){};
\end{scope};

\draw (120pt, 0pt)node[ft,minimum width=130pt,minimum height=30pt](ft0){};
\begin{scope}[xshift=80]
    \draw (0pt, 0pt)node[nod](01){};
    \draw (40pt, 0pt)node[nod,label={90:\type{stroking}}](00){};
    \draw (80pt, 0pt)node[nod](02){};
    \path[->] (00) edge node[below]{\feat{agent}}(01);
    \path[->] (00) edge node[below]{\feat{theme}}(02);
\end{scope};

\draw (250pt, 0pt)node[ft,minimum width=40pt,minimum height=20pt](ft2){};
\begin{scope}[xshift=240]
    \draw (0pt, 0pt)node[nod,label={0:\type{cat}}](22){};
\end{scope};

\draw (40pt, 50pt)node[nod,label={90:\type{a}}](so){};
\draw (0pt, 30pt)node[nod,label={180:\type{hole}}](sor){};
\draw (80pt, 30pt)node[nod,label={-90:\type{hole}}](sons){};
\path[->] (so) edge node[left]{$\feat{restr}$}(sor);
\path[->] (sor) edge[bend right,dashed] node[left]{$\type{scope}$}(ft1);
\path[->] (so) edge node[right,pos=0.4]{$\feat{rvar}$}(11);
\path[->] (so) edge node[right]{$\feat{nscope}$}(evns);
\path[->] (sons) edge[bend left,dashed] node[right,pos=0.2]{$\type{scope}$}(ft0);
\path[->] (so) edge node[left,pos=0.7]{$\feat{nsvar}$}(01);
\draw (200pt, 50pt)node[nod,label={90:\type{every}}](ev){};
\draw (240pt, 30pt)node[nod,label={0:\type{hole}}](evr){};
\draw (160pt, 30pt)node[nod,label={-90:\type{hole}}](evns){};
\path[->] (ev) edge node[right]{$\feat{restr}$}(evr);
\path[->] (evr) edge[bend left,dashed] node[right]{$\type{scope}$}(ft2);
\path[->] (ev) edge node[left,pos=0.4]{$\feat{rvar}$}(22);
\path[->] (ev) edge node[left]{$\feat{nscope}$}(evns);
\path[->] (evns) edge[bend right,dashed] node[left,pos=0.2]{$\type{scope}$}(ft0);
\path[->] (ev) edge node[right,pos=0.7]{$\feat{nsvar}$}(02);
\draw (120pt, 80pt)node[nod,label={90:\type{hole}}](h){};
\path[->] (h) edge[bend left,dashed] node[above left]{$\type{scope}$}(so);
\path[->] (h) edge[bend right,dashed] node[above right]{$\type{scope}$}(ev);
\end{fs}

\begin{fs}
\draw (-15pt, 0pt)node[ft,minimum width=50pt,minimum height=20pt](ft1){};
\begin{scope}
    \draw (0pt, 0pt)node[nod, label={180:\type{student}}](11){};
\end{scope};

\draw (120pt, 0pt)node[ft,minimum width=130pt,minimum height=30pt](ft0){};
\begin{scope}[xshift=80]
    \draw (0pt, 0pt)node[nod](01){};
    \draw (40pt, 0pt)node[nod, label={90:\type{stroking}}](00){};
    \draw (80pt, 0pt)node[nod](02){};
    \path[->] (00) edge node[below]{\feat{agent}}(01);
    \path[->] (00) edge node[below]{\feat{theme}}(02);
\end{scope};

\draw (250pt, 0pt)node[ft,minimum width=40pt,minimum height=20pt](ft2){};
\begin{scope}[xshift=240]
    \draw (0pt, 0pt)node[nod, label={0:\type{cat}}](22){};
\end{scope};

\draw (40pt, 50pt)node[nod, label={130:\type{a}}](so){};
\path[->] (so) edge node[left]{$\feat{restr}$}(ft1);
\path[->] (so) edge node[right,pos=0.4]{$\feat{rvar}$}(11);
\path[->] (so) edge node[right]{$\feat{nscope}$}(ft0);
\path[->] (so) edge node[left,pos=0.7]{$\feat{nsvar}$}(01);
\draw (200pt, 80pt)node[nod, label={90:\type{every}}](ev){};
\path[->] (ev) edge node[right]{$\feat{restr}$}(ft2);
\path[->] (ev) edge node[left,pos=0.4]{$\feat{rvar}$}(22);
\path[->] (ev) edge node[above left]{$\feat{nscope}$}(so);
\path[->] (ev) edge node[right,pos=0.7]{$\feat{nsvar}$}(02);
\end{fs}
\end{minipage}
\caption{At the top, underspecified frame for \ref{ex:stroking}. The labels are not drawn for legibility. At the bottom, the solved form with $\rel{every}>\rel{a}$. The \rel{scope} relations are not drawn, neither the \type{hole} types.}
\label{fs:stroking}
\end{figure}

Another incentive is to connect our system to traditional type-logical formulas. Indeed this allows us to give a second semantics to quantifier nodes, namely in term of logical quantifier (or with Generalized Quantifiers). This can be done during the solving or on the tree-shaped solved form. The principle of \textbf{formula transcription} consists in writing a formula with atoms, boolean connectives and (generalized) quantifiers which describes the way the instance verifying this utterance should be like. To remember which variables of the term are associated with which frame node, we set up a co-assignment function. Recursively we define $\FORM_{\gamma}(\tau)$ on a tree-shaped frame $\tau$ with $\gamma:V\rightharpoonup\X$ the co-assignment function and $\X$ a set of variables :

\begin{equation}
\label{eq:form}
\begin{array}{l}
\FORM_{\gamma}(W)=~\exists. \bar{y}.~(\bigwedge_{\beta\in\FACTS(W)}\beta~\wedge\bigwedge_{v\in\dom\gamma\cap W}k_v\cdot p_v\trieq \gamma(v)~)\\
\FORM_{\gamma}(e\bl (some\wedge \feat{restr}:\tau_1\wedge\feat{nscope}:\tau_2))=~\exists u.~\FORM_{\gamma_1}(\tau_1)\wedge\FORM_{\gamma_2}(\tau_2)\\
\FORM_{\gamma}(e\bl (every\wedge \feat{restr}:\tau_1\wedge\feat{nscope}:\tau_2))=~\forall u.~\FORM_{\gamma_1}(\tau_1)\to\FORM_{\gamma_2}(\tau_2)\\
\end{array}
\end{equation}

where $\bar{y}=L(W)\cap\NVAR$ is the set of variables labelling $W$, $k_v$ and $p_v$ are taken such that $\I_g(k_v)\in W$ and $\I_g(k_v,p_v)=v$, $\gamma_1=\gamma[u\mapsto e\bl\feat{rvar}]$ and $\gamma_2=\gamma[u\mapsto e\bl\feat{nsvar}]$. By $\FACTS(W)$ we mean a sufficient finite fact-set describing $W$, i.e. a set of atoms (except $\IN$, $\INWR$ and $\WR$) such that for every atom $\beta[k\bl p]$ satisfied in $W$ (just about the inside nodes) there exists a prefix $q$ of $p$ such that $\beta[k\bl q]\in\FACTS(W)$. These atoms are then easily translatable to a neo-Davidsionian predicate style. An example for Fig.~\ref{fs:bark-dog} (on ther right, taking the only reading) is given in \eqref{eq:bark-dog-form}, reminding \eqref{eq:interpr}.

\begin{equation}
\label{eq:bark-dog-form}
\forall u.~(\exists x_2.~x_2\bl\type{dog}\wedge x_2\trieq u)~\to~(\exists z_0,x_1.~z_0\bl\type{barking}\wedge z_0\bl\feat{agent}\trieq x_1\wedge\ x_1\trieq u)
\end{equation}

The advantage of having a formula is that we can use existing strategies to make the sentence interact with the instance. But the rising question is: what interaction do we want? On the first hand, it seems relevant to require \textbf{model checking}, i.e. tell whether the instance verifies the formula. That amounts to truth-conditional semantics. But to go further, we might require dynamic semantics, i.e. to \textbf{accommodate} (i.e. to update the model) so that it satisfies the formula. While the first procedure can be done with a FOL Model-Checker, the second one is more subtle. Namely, the use of quantifiers can lead to infinite creation of nodes in the general case. Trying to avoid this phenomenon in linguistic situations may use a maximal set of quantification (e.g. considering only the entities currently present with the speaker). However, this set is often implicit and it goes beyond our study. Moreover, this is still not clear when nodes have to be created, and when not. In simple examples like \ref{ex:stroking} with $\textit{every}>\textit{a}$, a new \type{stroking} node should be created for each \type{cat}. Only later could its \feat{agent} possibly merge with the individual whom we learn that the stroker of that cat (considered in that utterance) was. But this is not related to the sole presence of $\exists$ in the formulas, because it behaves differently in negative polarity contexts (in particular under negation, but it is less good defined for other quantifier scopes, like \textit{most}). So a completely separate analysis has to be made to be able to produce this algorithm.

%%%%%
\section{Conclusion and future prospects}

Feature Structures can be extended by wrappings, a new construction enabling us to consider certain sets of nodes as nodes themselves. Adding a wrapping constructor in Attribute-Value Logic preserves the translatability into First-Order Logic and interesting properties of model homomorphisms. This formalism is powerful enough to model quantifiers in frames, namely with a more syntactical viewpoint called Quantified Complexes. Moreover, it expands the formal pipeline of Kallmeyer and Osswald, by unifying the generated lexical minimal models to an underspecified sentential representation. Our system goes even further by proposing ways to deal with the discourse level.

Some questions are however still under discussion. The cognitive and linguistic interpretation of wrappings is the main one. A way to extend this research would be to work on the dynamic interaction of sentences with the instance, updating the current knowledge of the world. Developing wrappings to model arbitrary sets and conjunction also seems to be promising. Other interests regarding frames range over the connection of this system with other close formalisms, like Cooper's Type Theory with Records for linguistics. 

\begin{small}
\begin{multicols}{2}

\end{multicols}
\end{small}

\begin{appendix}
%%%%%
\section{Additional figures}
\label{ap:fig}

\begin{figure}[ht]
\begin{minipage}{0.5\textwidth}
\begin{fs}
\draw (0,0) node[nod,label={180:$M_1:$}](11){};
\draw (40,0) node[nod](12){};
\path[->] (11) edge[above] node{\feat{p}} (12);
\begin{scope}[xshift=120]
\draw (0,0) node[nod,label={180:$M_2:$}](21){};
\draw (40,0) node[nod](22){};
\draw (0,-30) node[nod](23){};
\path[->] (21) edge[above] node{\feat{p}} (22);
\end{scope}
\begin{scope}[xshift=120,yshift=-80pt]
\draw (0,0) node[nod,label={180:$M_3:$}](31){};
\draw (40,0) node[nod](32){};
\draw (0,-30) node[nod](33){};
\draw (40,-30) node[nod](34){};
\path[->] (31) edge[above] node{\feat{p}} (32);
\path[->] (33) edge[above] node{\feat{p}} (34);
\end{scope}
\begin{scope}[yshift=-80pt]
\draw (0,0) node[nod,label={180:$M_4:$}](41){};
\draw (40,-15) node[nod](42){};
\draw (0,-30) node[nod](43){};
\path[->] (41) edge[above] node{\feat{p}} (42);
\path[->] (43) edge[below] node{\feat{p}} (42);
\end{scope}

\path[->] (11) edge[hom,bend left] (21);
\path[->] (12) edge[hom,bend left] (22);
\path[->] (21) edge[hom,bend right] (31);
\path[->] (22) edge[hom,bend left] (32);
\path[->] (23) edge[hom,bend right] (33);
\path[->] (31) edge[hom,bend right] (41);
\path[->] (32) edge[hom,bend right] (42);
\path[->] (33) edge[hom,bend left] (43);
\path[->] (34) edge[hom,bend left] (42);
\path[->] (41) edge[hom,bend left] (11);
\path[->] (42) edge[hom,bend right] (12);
\path[->] (43) edge[hom,bend left] (11);
\end{fs}
\end{minipage}
\begin{minipage}{0.5\textwidth}
\begin{fs}
\begin{scope}[xshift=30,yshift=30]
\draw (0,0) node[nod,label={180:$M_1':$}](11){$x$};
\draw (40,0) node[nod](12){};
\path[->] (11) edge[above] node{\feat{p}} (12);
\end{scope}
\begin{scope}[xshift=120]
\draw (0,0) node[nod,label={180:$M_2':$}](21){$x$};
\draw (40,0) node[nod](22){};
\draw (0,-30) node[nod](23){$y$};
\path[->] (21) edge[above] node{\feat{p}} (22);
\end{scope}
\begin{scope}[xshift=120,yshift=-80pt]
\draw (0,0) node[nod,label={180:$M_3':$}](31){$x$};
\draw (40,0) node[nod](32){};
\draw (0,-30) node[nod](33){$y$};
\draw (40,-30) node[nod](34){};
\path[->] (31) edge[above] node{\feat{p}} (32);
\path[->] (33) edge[above] node{\feat{p}} (34);
\end{scope}
\begin{scope}[yshift=-80pt]
\draw (0,0) node[nod,label={180:$M_4':$}](41){$x$};
\draw (40,-15) node[nod](42){};
\draw (0,-30) node[nod](43){$y$};
\path[->] (41) edge[above] node{\feat{p}} (42);
\path[->] (43) edge[below] node{\feat{p}} (42);
\end{scope}
\begin{scope}[xshift=-30,yshift=-30]
\draw (0,0) node[nod,label={180:$M_5':$}](51){$x,y$};
\draw (40,0) node[nod](52){};
\path[->] (51) edge[above] node{\feat{p}} (52);
\end{scope}

\path[->] (11) edge[hom,bend left] (21);
\path[->] (12) edge[hom,bend left] (22);
\path[->] (21) edge[hom,bend right] (31);
\path[->] (22) edge[hom,bend left] (32);
\path[->] (23) edge[hom,bend right] (33);
\path[->] (31) edge[hom,bend right] (41);
\path[->] (32) edge[hom,bend right] (42);
\path[->] (33) edge[hom,bend left] (43);
\path[->] (34) edge[hom,bend left] (42);
\path[->] (41) edge[hom,bend left] (51);
\path[->] (42) edge[hom,bend right] (52);
\path[->] (43) edge[hom,bend left] (51);
\end{fs}
\end{minipage}
\caption[caption]{\\
$\bullet$ On the left: Four frames without reachability constraint of a same equivalence class and homomorphisms in gray.\\
$\bullet$ On the right: Five linear ordered frames with reachability constraint. They are each time (not unique) successors of one another.\\
$\bullet$ The key is to be able to distinguish $M_1'$ from $M_5'$, i.e. make the difference between a single simple node and a single multiple node (stemming from the merging of two single nodes), which hence contains inherently strictly more information.}
\label{fs:eq-class}
\end{figure}

\begin{figure}[ht]
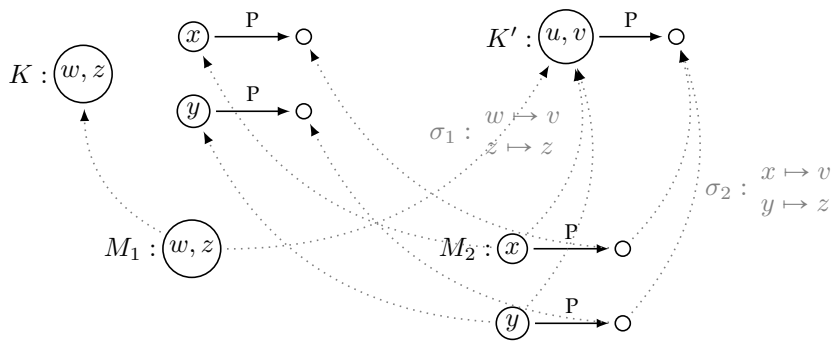

\begin{fs}
\begin{scope}[xshift=-20]
\draw (-20,-15) node[draw,circle,label={180:$K:$}](30){$w,z$};
\draw (20,0) node[draw,circle](31){$x$};
\draw (60,0) node[draw,circle](32){};
\draw (20,-30) node[draw,circle](33){$y$};
\draw (60,-30) node[draw,circle](34){};
\path[->] (31) edge[above] node{\feat{p}} (32);
\path[->] (33) edge[above] node{\feat{p}} (34);
\end{scope}
\begin{scope}[xshift=140]
\draw (0,0) node[draw,circle,label={180:$K':$}](11){$u,v$};
\draw (40,0) node[draw,circle](12){};
\path[->] (11) edge[above] node{\feat{p}} (12);
\end{scope}
\begin{scope}[xshift=120,yshift=-80pt]
\draw (0,0) node[draw,circle,label={180:$M_2:$}](21){$x$};
\draw (40,0) node[draw,circle](22){};
\draw (0,-30) node[draw,circle](23){$y$};
\draw (40,-30) node[draw,circle](24){};
\path[->] (21) edge[above] node{\feat{p}} (22);
\path[->] (23) edge[above] node{\feat{p}} (24);
\end{scope}
\begin{scope}[yshift=-80pt]
\draw (0,0) node[draw,circle,label={180:$M_1:$}](51){$w,z$};
\end{scope}

\path[->] (51) edge[draw=gray,dotted,bend left] (30);
\path[->] (51) edge[draw=gray,dotted,bend right] node[pos=0.8]{$\gray{\sigma_1:}\begin{array}{l}\gray{w\mapsto v}\\\gray{z\mapsto z}\end{array}$}(11);
\path[->] (21) edge[draw=gray,dotted,bend right] (11);
\path[->] (21) edge[draw=gray,dotted,bend left] (31);
\path[->] (22) edge[draw=gray,dotted,bend right] (12);
\path[->] (22) edge[draw=gray,dotted,bend left] (32);
\path[->] (24) edge[draw=gray,dotted,bend right] node[right]{$\gray{\sigma_2:}\begin{array}{l}\gray{x\mapsto v}\\\gray{y\mapsto z}\end{array}$}(12);
\path[->] (24) edge[draw=gray,dotted,bend left] (34);
\path[->] (23) edge[draw=gray,dotted,bend right] (11);
\path[->] (23) edge[draw=gray,dotted,bend left] (33);
\end{fs}
\caption{Example of two models having no least upper bound (lub) up to variable renaming. The $\alpha$-subsumed model on the left ($K$) would be the wanted result. But $M_1,M_2\sqsubseteq K'$ without having $K\sqsubseteq K'$.}
\label{fs:lub}
\end{figure}

\begin{equation}
\label{eq:abstr-fact}
\begin{array}{rl}
U_1 :&\begin{array}{c}
\IN(x,T_1), x\bl\feat{pp}\trieq x\bl\varepsilon, b\bl\feat{q}\trieq T_3\bl\varepsilon
,\IN(y,T_2), y\bl\feat{p}:t,\rel{r}(b,y), x\bl\feat{p}\trieq y\bl\varepsilon,\IN(z,T_3), z\bl\feat{q}:\top\\
x\bl\feat{pp}:\top,x\bl \feat{p}:\top,x\bl\top,\INWR(x),\WR(T_1),\INWR(T_2),y\bl\feat{p}:\top,y\bl\top,b\bl\feat{q}:\top,b\bl\top,\\
\INWR(z),\WR(T_3),z\bl\top,T_1\trieq T_2,\IN(x,T_2),\IN(y,T_1)\\
y\bl\feat{pp}\trieq y\bl\varepsilon,x\bl\feat{pp}:t\\[1.5em]
\end{array}\\
U_2 :&\begin{array}{c}
\IN(x,T_1), x\bl\feat{pp}\trieq x\bl\varepsilon, b\bl\feat{q}\trieq T_3\bl\varepsilon
,\IN(y,T_2), y\bl\feat{p}:t,\rel{r}(b,y), x\bl\feat{p}\trieq y\bl\varepsilon,\IN(z,T_3), z\bl\feat{q}:\top\\
x\bl\feat{pp}:\top,x\bl \feat{p}:\top,x\bl\top,\INWR(x),\WR(T_1),\INWR(T_2),y\bl\feat{p}:\top,y\bl\top,b\bl\feat{q}:\top,b\bl\top,\\
\INWR(z),\WR(T_3),z\bl\top,T_1\trieq T_2,\IN(x,T_2),\IN(y,T_1)\\
y\bl\feat{p}\trieq y\bl\varepsilon,x\bl\feat{pp}:t,x\bl\feat{pp}\trieq x\bl\feat{p},x\bl\feat{p}\trieq x\bl\varepsilon,x\bl\feat{p}:t\\[1.5em]
\end{array}\\
&\text{Sufficient fact-sets for \eqref{eq:abstr} produced from \eqref{eq:abstr-cnf}.}\\
&\text{The symmetric atoms are not written.}
\end{array}
\end{equation}

\begin{figure}[ht]
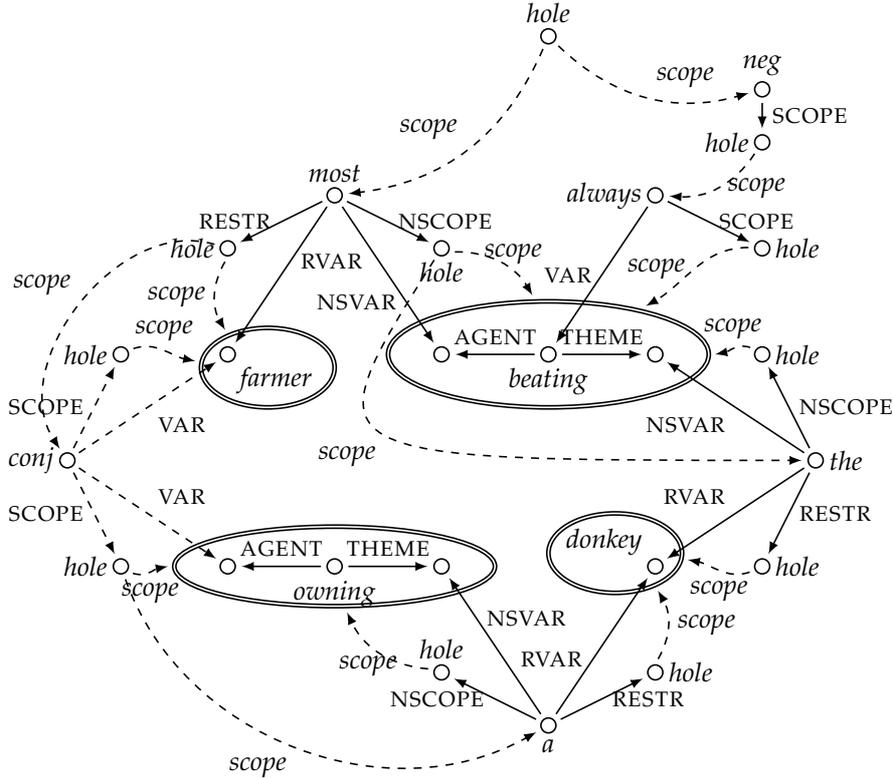

\begin{fs}
\draw (15pt, -5pt)node[ft,minimum width=50pt,minimum height=30pt](ft1){};
\begin{scope}
    \draw (0pt, 0pt)node[nod,label={-40:\type{farmer}}](11){};
\end{scope};
\draw (120pt, 0pt)node[ft,minimum width=120pt,minimum height=40pt](ft0){};
\begin{scope}[xshift=80]
    \draw (0pt, 0pt)node[nod](01){};
    \draw (40pt, 0pt)node[nod,label={-90:\type{beating}}](00){};
    \draw (80pt, 0pt)node[nod](02){};
    \path[->] (00) edge node[above]{\feat{agent}}(01);
    \path[->] (00) edge node[above]{\feat{theme}}(02);
\end{scope};
\draw (145pt, -75pt)node[ft,minimum width=50pt,minimum height=30pt](ft2){};
\begin{scope}[xshift=160,yshift=-80]
    \draw (0pt, 0pt)node[nod,label={150:\type{donkey}}](22){};
\end{scope};
\draw (40pt, -80pt)node[ft,minimum width=120pt,minimum height=30pt](ft3){};
\begin{scope}[xshift=0,yshift=-80]
    \draw (0pt, 0pt)node[nod](31){};
    \draw (80pt, 0pt)node[nod](32){};
    \draw (40pt, 0pt)node[nod,label={-90:\type{owning}}](30){};
    \path[->] (30) edge node[above]{\feat{theme}}(32);
    \path[->] (30) edge node[above]{\feat{agent}}(31);
\end{scope};
\draw (40pt, 60pt)node[nod,label={90:\type{most}}](mo){};
\draw (0pt, 40pt)node[nod,label={180:\type{hole}}](mor){};
\draw (80pt, 40pt)node[nod,label={-90:\type{hole}}](mons){};
\path[->] (mo) edge node[left]{$\feat{restr}$}(mor);
\path[->] (mor) edge[bend right,dashed] node[left]{$\type{scope}$}(ft1);
\path[->] (mo) edge node[right,pos=0.4]{$\feat{rvar}$}(11);
\path[->] (mo) edge node[right]{$\feat{nscope}$}(mons);
\path[->] (mons) edge[bend left,dashed] node[right,pos=0.2]{$\type{scope}$}(ft0);
\path[->] (mo) edge node[left,pos=0.7]{$\feat{nsvar}$}(01);

\draw (220pt, -40pt)node[nod,label={0:\type{the}}](th){};
\draw (200pt, -80pt)node[nod,label={0:\type{hole}}](thr){};
\draw (200pt, 0pt)node[nod,label={0:\type{hole}}](thns){};
\path[->] (th) edge node[right]{$\feat{restr}$}(thr);
\path[->] (thr) edge[bend left,dashed,in=180] node[below]{$\type{scope}$}(ft2);
\path[->] (th) edge node[above left]{$\feat{rvar}$}(22);
\path[->] (th) edge node[right]{$\feat{nscope}$}(thns);
\path[->] (thns) edge[bend right,dashed,in=180] node[above]{$\type{scope}$}(ft0);
\path[->] (th) edge node[below left]{$\feat{nsvar}$}(02);

\draw (-60pt, -40pt)node[nod,label={180:\type{conj}}](co){};
\draw (-40pt, -80pt)node[nod,label={180:\type{hole}}](co1){};
\draw (-40pt, 0pt)node[nod,label={180:\type{hole}}](co2){};
\path[->] (co) edge[dashed] node[left]{$\feat{scope}$}(co1);
\path[->] (co1) edge[bend right,dashed,in=180] node[below]{$\type{scope}$}(ft3);
\path[->] (co) edge[dashed] node[above right]{$\feat{var}$}(31);
\path[->] (co) edge[dashed] node[left]{$\feat{scope}$}(co2);
\path[->] (co2) edge[bend left,dashed,in=180] node[above]{$\type{scope}$}(ft1);
\path[->] (co) edge[dashed] node[below right]{$\feat{var}$}(11);

\draw (120pt, -140pt)node[nod,label={-90:\type{a}}](so){};
\draw (160pt, -120pt)node[nod,label={0:\type{hole}}](sor){};
\draw (80pt, -120pt)node[nod,label={90:\type{hole}}](sons){};
\path[->] (so) edge node[right]{$\feat{restr}$}(sor);
\path[->] (sor) edge[bend right,dashed] node[right]{$\type{scope}$}(ft2);
\path[->] (so) edge node[left,pos=0.4]{$\feat{rvar}$}(22);
\path[->] (so) edge node[left]{$\feat{nscope}$}(sons);
\path[->] (sons) edge[bend left,dashed] node[left,pos=0.2]{$\type{scope}$}(ft3);
\path[->] (so) edge node[right,pos=0.7]{$\feat{nsvar}$}(32);

\draw (120pt, 120pt)node[nod,label={90:\type{hole}}](h){};
\path[->] (h) edge[bend left,dashed] node[above left]{$\type{scope}$}(mo);
\draw[->,dashed] (mons) .. controls (30pt,-40pt) .. node[below left]{$\type{scope}$}(th);
\path[->] (mor) edge[bend right=70,dashed] node[above left]{$\type{scope}$}(co);
\path[->] (co1) edge[bend right=50,dashed] node[below left]{$\type{scope}$}(so);

\draw (200pt, 100pt)node[nod,label={90:\type{neg}}](ne){};
\draw (200pt, 80pt)node[nod,label={180:\type{hole}}](nes){};
\path[->] (h) edge[bend right,dashed] node[above right]{$\type{scope}$}(ne);
\path[->] (ne) edge node[right]{$\feat{scope}$}(nes);
\draw (160pt, 60pt)node[nod,label={180:\type{always}}](al){};
\draw (200pt, 40pt)node[nod,label={0:\type{hole}}](als){};
\path[->] (nes) edge[bend left,dashed] node[right]{$\type{scope}$}(al);
\path[->] (al) edge node[right]{$\feat{scope}$}(als);
\path[->] (als) edge[bend right,dashed,in=180] node[left]{$\type{scope}$}(ft0);
\path[->] (al) edge node[left]{$\feat{var}$}(00);
\end{fs}
\caption{Quantified Complex for the donkey sentence \textit{Most farmers who own a donkey don't always beat it}. General quantifiers like \textit{most} are modelled identically as logical ones. The relative clause is modelled as an adjunct connected by the conjunction node \type{conj} (with relations instead of attributes to be less syntactical). It is necessary to split the wrapping contents this way to derive correct readings. Quantifier \type{a} cannot rise above \type{most} because it is in \type{most} restrictor. The donkey anaphora \textit{it} is captured by the \type{the} representation, chosen because it also entails uniqueness and presupposition like an anaphora (\textit{it} = the donkey which was introduced), so scoping on the same \type{donkey} wrapping. We also propose here a modelling for other logical connectives (negation \type{neg}) and quantificational words (\textit{always}) following the traditional type-logical approach. One reading would be in modern GQT style: $\type{most}(x,\type{farmer}(x)\wedge\type{some}(y,\type{donkey}(y),\type{owning}(x,y)),\type{not}(\type{always}(e,\type{the}(y,\type{donkey}(y),\type{beating}(e,x,y)))))$}
\label{fs:donkey}
\end{figure}

\newpage
%%%%%
\section{Proofs}
\label{ap:proof}

\begin{theoreme}
\label{th:unique}
There is at most one homomorphism between two models.
\end{theoreme}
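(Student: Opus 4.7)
Let $h = \langle \hV, \hW\rangle$ and $h' = \langle \hV', \hW'\rangle$ be two homomorphisms from $M = \langle V, \W, \I\rangle$ (with assignment $g$) to $M' = \langle V', \W', \I'\rangle$ (with assignment $g'$). The plan is to show $\hV = \hV'$ first, and then deduce $\hW = \hW'$ from it.

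For the equality $\hV = \hV'$, the key observation is that, by the reachability constraint~\ref{ex:reach}, every node $v \in V$ can be written as $v = \I_g(k,p)$ for some label $k \in \NLABEL$ (with $\I_g(k)\downarrow$) and some path $p \in \ATTR^*$. I would therefore proceed by induction on the length of $p$. The base case $p = \varepsilon$ splits according to whether $k$ is a base-label or a node variable: in the first case, the (base-labels) clause of \eqref{eq:hom} gives $\hV(\I(k)) = \I'(k) = \hV'(\I(k))$; in the second case, the (assignments) clause gives $\hV(\gN(k)) = g'(k) = \hV'(\gN(k))$. For the inductive step, if $p = p'\feat{q}$ and $v = \I(\feat{q})(\I_g(k,p'))$, the induction hypothesis applied to $\I_g(k,p')$ combined with the (attributes) clause forces $\hV(v) = \I'(\feat{q})(\hV(\I_g(k,p'))) = \I'(\feat{q})(\hV'(\I_g(k,p'))) = \hV'(v)$. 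Hence $\hV$ and $\hV'$ coincide on all of $V$.

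For the equality $\hW = \hW'$, let $W \in \W$ be arbitrary. Since $W$ is nonempty by definition of a FSW, we may pick some $v \in W$. By the (wrappings) clause we have $\hV(v) \in \hW(W)$ and $\hV'(v) \in \hW'(W)$, and these two target wrappings are elements of $\W'$. From the previous step $\hV(v) = \hV'(v)$, so this common value lies simultaneously in $\hW(W)$ and $\hW'(W)$. But the wrappings in $\W'$ are pairwise disjoint (this is built into the definition of a FSW), so $\hW(W) = \hW'(W)$, which concludes the proof.

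The main delicate point is the induction step in the first part: one must verify that the (attributes) condition does in fact pin down $\hV$ on the successor node, which requires ensuring that $\I'(\feat{q})(\hV(\I_g(k,p')))$ is defined whenever $\I(\feat{q})(\I_g(k,p'))$ is, but this is exactly what the (attributes) clause asserts. A secondary subtlety is that $\hV$ may send a node to a wrapping (an element of $\W'$), yet this causes no issue: the (wrappings) clause requires $\hV(v) \in \hW(W) \subseteq V'$ whenever $v$ lies in some wrapping, so wrappingization only happens for unwrapped nodes, and for those the argument above applies verbatim.
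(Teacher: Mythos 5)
Your proof is correct and follows essentially the same route as the paper's: establish agreement of the node maps on labelled nodes via the (base-labels) and (assignments) clauses, extend to all nodes by induction on attribute paths using the reachability constraint and the (attributes) clause, and then deduce agreement of the wrapping maps from the (wrappings) clause together with the pairwise disjointness of wrappings. Your write-up is in fact somewhat more explicit than the paper's (e.g.\ the case split in the base case and the remark on nodes mapped to wrappings), but the underlying argument is identical.
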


\begin{proof}
Take $M$ and $M'$ two models such that $h$ and $h'$ are two homomorphisms from $M$ to $M'$. Define $L(M)$ the set of all labels on which $\I_g$ (i.e. $\I$ and $g$) are defined. Then, for every node $v\in V$ such that (s.t.) there exists some $k\in L(M)$ s.t. $\I_g(k)=v$, we have $h(v)=h(\I_g(k))=\I'_{g'}(k)=h'(I_g(k))=h'(v)$. Now, for any node $v\in V$ reachable from $(v',p)$, by a straightforward induction on $p$, $h(v)=\I'(p)(h(v'))=\I'(p)(h'(v'))=h'(v)$. Finally, for a wrapping $W\in \W$, there is at least one node $v\in W$ labelled, by say $k$. So $h(v)=h(v')$ and it belongs to $h(W)$ and $h'(W)$, which must be hence equal thanks to the partition $\hat{\W'}$. Therefore $h=h'$.
\end{proof}

\begin{lemma}
\label{lem:mei}
Mutual subsumption is equivalent to isomorphism.
\end{lemma}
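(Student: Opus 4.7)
The plan is to prove the two directions separately. The forward direction is immediate: if $\sigma: M \to M'$ is an isomorphism, then by definition $\sigma$ is a homomorphism witnessing $M \sqsubseteq M'$, and $\sigma^{-1}$ is required to be a homomorphism, witnessing $M' \sqsubseteq M$.

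For the converse, suppose $h: M \to M'$ and $h': M' \to M$ are homomorphisms. The key idea is to exploit Proposition~\ref{th:unique}. First I would check (routinely, by chasing through each clause of (\ref{eq:hom})) that the composition of two homomorphisms is again a homomorphism. Here the composition $h' \circ h$ is defined piecewise on $V$: for $v \in V$ with $h_V(v) \in V'$ we apply $h'_V$, and with $h_V(v) \in \W'$ we apply $h'_W$; on $\W$ we apply $h'_W \circ h_W$. Both $h' \circ h: M \to M$ and $\mathrm{id}_M: M \to M$ are then homomorphisms between the same two models, so Proposition~\ref{th:unique} forces $h' \circ h = \mathrm{id}_M$, and symmetrically $h \circ h' = \mathrm{id}_{M'}$.

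The main obstacle is that $h_V$ may a priori send nodes into $\W'$, so I need to rule this out before concluding bijectivity on nodes. From $(h' \circ h)_V(v) = v \in V$, if $h_V(v)$ were in $\W'$ then $(h' \circ h)_V(v) = h'_W(h_V(v)) \in \W$, contradicting $\W \cap V = \emptyset$. Hence $h_V(V) \subseteq V'$, and symmetrically $h'_V(V') \subseteq V$. The identities $h'_V \circ h_V = \mathrm{id}_V$ and $h_V \circ h'_V = \mathrm{id}_{V'}$ then show that $h_V$ is a bijection $V \to V'$ with inverse $h'_V$, and likewise $h_W$ and $h'_W$ are mutually inverse bijections between $\W$ and $\W'$.

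Thus $h$ satisfies $\mathrm{codom}\, h_V \subseteq V'$, is bijective, and admits $h'$ as its two-sided inverse, where $h'$ is a homomorphism by assumption. This is exactly the definition of an isomorphism, completing the proof.
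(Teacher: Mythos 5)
Your proof is correct and follows essentially the same route as the paper: compose the two homomorphisms, invoke Proposition~\ref{th:unique} to identify $h'\circ h$ with the identity, use $\W\cap V=\emptyset$ to force $\codom\,h_V\subseteq V'$, and conclude bijectivity. If anything, your final step is slightly cleaner than the paper's, since you read off $h^{-1}=h'$ directly from the two-sided inverse identities rather than re-running the uniqueness argument on labels.
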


\begin{proof}
Take $h:M\to M'$ and $h':M'\to M$ two mutual homomorphisms. Define $\sigma=h'\circ h:M\to M$. As we also have $\mathsf{id}\eqdef\ <\!\mathsf{id}_V,\mathsf{id}_{\W}\!>:M\to M$ as obvious homomorphism, by lemma \ref{th:unique} $\sigma=\mathsf{id}$. So, if we had for some $v\in V$, $h(v)\in\W'$, it would imply $\sigma(v)\in\W$, what is wrong. So $\codom~h_V\subseteq V'$. Moreover, we directly have the bijectivity of $h$ by the surjectivity of $\sigma$ and by symmetry. Homomorphisms $h$ and $h'$ also tell us that $M$ and $M'$ have the same labels thanks to (base-labels) and (assignments) of \eqref{eq:hom}, i.e. $L(M)=L(M')$, so by reusing the proof of lemma \ref{lem:mei} (in fact properties (types) and (relations) of \eqref{eq:hom} are not necessary in the lemma), $h^{-1}=h'$. So $h^{-1}$ is a homomorphism, so $h$ is an isomorphism.
\end{proof}

\begin{theoreme}
\label{th:order}
Subsumption is an order up to isomorphism.
\end{theoreme}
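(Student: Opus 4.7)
The plan is to show that $\sqsubseteq$ is a preorder on models and that two models are mutually subsuming iff they are isomorphic, which together make $\sqsubseteq$ descend to a partial order on isomorphism classes. Reflexivity is immediate by taking $\mathsf{id} = \langle \mathsf{id}_V, \mathsf{id}_{\W} \rangle$, for which every clause of \eqref{eq:hom} reduces to a triviality (equalities in place of the required inclusions and memberships).

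For transitivity, given $h : M \to M'$ and $h' : M' \to M''$, I would compose them componentwise: set $h''_{\W} = h'_{\W} \circ h_{\W}$, and define $h''_V$ by cases, namely $h''_V(v) = h'_V(h_V(v))$ when $h_V(v) \in V'$ and $h''_V(v) = h'_{\W}(h_V(v))$ when $h_V(v) \in \W'$. Then $h''_V : V \to V'' \uplus \W''$ is well typed, and each clause of \eqref{eq:hom} propagates: for (types) and (relations), image of a subset remains inside the image of the bigger subset; for (attributes), two successive applications yield $h''_V(\I(\feat{p})(v)) = h'_V(\I'(\feat{p})(h_V(v))) = \I''(\feat{p})(h''_V(v))$; (base-labels) and (assignments) chain the equalities; and (wrappings) follows from $v\in W \Rightarrow h_V(v)\in h_{\W}(W) \Rightarrow h''_V(v)\in h''_{\W}(W)$.

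Antisymmetry up to isomorphism is supplied by Lemma \ref{lem:mei}. To make the quotient construction legitimate, I would also quickly check that isomorphism is an equivalence relation: reflexivity via $\mathsf{id}$, symmetry built into the definition (the inverse is itself required to be a homomorphism), and transitivity by the composition above, noting that bijections compose to bijections and that the codomain restriction $\codom~\hV \subseteq V'$ is preserved under composition when both factors are isomorphisms.

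The main obstacle is the case split in the transitivity step, caused by $\hV$ being allowed to send nodes into wrappings. Since attributes in $\I'$ are only defined on nodes, if $h_V(v) \in \W'$ the right-hand side of the (attributes) clause becomes undefined and the clause must be read vacuously; one has to check that this forces the corresponding input on the $M$ side to be undefined as well, so that the composed attribute equality goes through automatically. Once this dichotomy is handled cleanly, every other clause is orthogonal to it and the verification becomes routine.
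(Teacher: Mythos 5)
Your proof is correct and follows the same outline as the paper's (reflexivity via the identity, antisymmetry up to isomorphism via Lemma \ref{lem:mei}, quotient to isomorphism classes); the paper simply declares transitivity ``straightforward and left to the reader,'' whereas you actually carry it out, correctly identifying the one non-routine point — that a node sent into a wrapping by $\hV$ can have no defined attributes in the source model, so the (attributes) clause composes without obstruction. Nothing is missing.
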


\begin{proof}
Reflexivity is trivial with any isomorphism.

Transitivity is straightforward and left to the reader.

Symmetry results from proposition \ref{th:order}.
\end{proof}

\begin{theoreme}[Corollary]
\label{th:lub}
If two models $M$ and $M'$ have two least upper bound $K$ and $K'$, then they are isomorphic ($K\simeq K'$).
\end{theoreme}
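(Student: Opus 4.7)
The plan is to reduce this corollary to antisymmetry of $\sqsubseteq$ up to isomorphism, which was essentially established by Lemma \ref{lem:mei}. Specifically, I will show that $K$ and $K'$ mutually subsume each other and then invoke Lemma \ref{lem:mei} (mutual subsumption is equivalent to isomorphism) to conclude.

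First I would unpack the hypothesis: by definition of least upper bound, $K$ is an upper bound of $\{M, M'\}$, meaning $M \sqsubseteq K$ and $M' \sqsubseteq K$; likewise $M \sqsubseteq K'$ and $M' \sqsubseteq K'$. Since $K'$ is an upper bound and $K$ is a \emph{least} upper bound, we obtain $K \sqsubseteq K'$. By the symmetric argument (swapping the roles of $K$ and $K'$), we also get $K' \sqsubseteq K$. Thus $K$ and $K'$ mutually subsume each other, which by Lemma \ref{lem:mei} yields an isomorphism $K \simeq K'$.

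The proof is essentially a two-line argument once Lemma \ref{lem:mei} is in place, so there is no real obstacle here; it is the standard uniqueness-of-lub argument specialised to the quotient by isomorphism. The only subtlety worth mentioning explicitly is that the statement only claims isomorphism (not equality), precisely because $\sqsubseteq$ is only a preorder on models and becomes a genuine order on isomorphism classes (by Proposition \ref{th:order}); thus the antisymmetry needed to deduce $K = K'$ at the level of classes corresponds exactly to $K \simeq K'$ at the level of models.
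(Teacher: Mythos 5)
Your proof is correct and follows essentially the same route as the paper: both extract $K\sqsubseteq K'$ and $K'\sqsubseteq K$ from the definition of least upper bound and then conclude via Lemma \ref{lem:mei} that mutual subsumption gives isomorphism. Nothing is missing.
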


\begin{proof}
By definition of the lub, $M,M'\sqsubseteq K,K'$ and for all $K''$ such that $M,M'\sqsubseteq K''$, we have $K',K\sqsubseteq K''$. In particular, $K\sqsubseteq K'$ and $K'\sqsubseteq K$, so by lemma \ref{lem:mei}, $K$ and $K'$ are isomorphic.
\end{proof}

\begin{theoreme}
\label{th:cmms}
If $\M_1$ and $\M_2$ are two canonical Minimal Model Sets for $\chi$, then $\M_1\simeq \M_2$.
\end{theoreme}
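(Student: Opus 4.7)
The plan is to establish a bijection $\phi \colon \M_1 \to \M_2$ pairing each model with an isomorphic counterpart, by leveraging the double entailments $\M_1 \iseq \chi$ and $\M_2 \iseq \chi$ together with the antichain condition.

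First, I would fix an arbitrary $M_1 \in \M_1$. Since $\M_1 \models \chi$, we have $M_1 \models \chi$; and since $\chi \models \M_2$, there exists some $M_2 \in \M_2$ with $M_2 \sqsubseteq M_1$. Applying the same step in the reverse direction to $M_2$ (which also satisfies $\chi$ because $\M_2 \models \chi$), there exists $M_1' \in \M_1$ with $M_1' \sqsubseteq M_2 \sqsubseteq M_1$. The antichain property of $\M_1$ for $\sqsubseteq$ forbids two distinct comparable elements, so $M_1' = M_1$ and the chain collapses to $M_1 \sqsubseteq M_2 \sqsubseteq M_1$. Lemma~\ref{lem:mei} (mutual subsumption equals isomorphism) then yields $M_1 \simeq M_2$. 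By the symmetric argument, every element of $\M_2$ is isomorphic to some element of $\M_1$.

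Second, I would bundle this correspondence into a function $\phi \colon \M_1 \to \M_2$, sending each $M_1$ to the $M_2$ produced above (invoking choice if needed). Well-definedness follows because any two elements of $\M_2$ isomorphic to the same $M_1$ would be isomorphic to each other, hence mutually subsumed, hence equal by the antichain condition on $\M_2$. Injectivity: if $\phi(M) = \phi(M')$ then $M \simeq \phi(M) = \phi(M') \simeq M'$, so $M \simeq M'$ and by antichainness of $\M_1$ we get $M = M'$. Surjectivity is the symmetric version of the first paragraph: any $N \in \M_2$ is isomorphic to some $M \in \M_1$, and then $\phi(M) \simeq N$ forces $\phi(M) = N$. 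This $\phi$ realises $\M_1 \simeq \M_2$ as a bijection pairing isomorphic models.

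The argument is essentially routine given the preceding results, and the only delicate point is the handling of the antichain condition. One must read ``antichain for $\sqsubseteq$'' as ruling out any two distinct comparable elements, in particular any two distinct isomorphic ones (since isomorphism entails mutual subsumption via Lemma~\ref{lem:mei}). Without this reading, the collapse $M_1' = M_1$ and the well-definedness of $\phi$ would only hold up to isomorphism, and the statement of the theorem would have to be recast in terms of equivalence classes rather than a pointwise bijection.
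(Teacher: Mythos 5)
Your proof is correct and follows essentially the same route as the paper's: chase the two entailments $\M_1\iseq\chi\iseq\M_2$ to sandwich each $M_1\in\M_1$ between elements of the two sets, collapse the sandwich with the antichain condition, and invoke Lemma~\ref{lem:mei} to turn mutual subsumption into isomorphism. Your write-up is in fact slightly tighter at one point --- you obtain $M_1\sqsubseteq M_2$ by collapsing $M_1'\sqsubseteq M_2\sqsubseteq M_1$ with the antichain condition on $\M_1$, whereas the paper asserts the existence of $M_2'\in\M_2$ with $M_1\sqsubseteq M_2'$ ``by symmetry'', which is not a literal symmetric instance of the definitions --- and your explicit bijection $\phi$ merely spells out what the paper compresses into its final sentence.
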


\begin{proof}
By hypothesis $\chi\models\M_2$, we have that for every model $M_2'$ such that $M_2'\models\chi$, there exists $M_2\in\M_2$ such that $M_2'\models M_2$. Yet every $M_1\in\M_1$ is a candidate by hypothesis $\M_1\models\chi$. So by setting $M_1\in\M_1$, there exists $M_2\in\M_2$ such that $M_2\sqsubseteq M_1$. By symmetry there is also some $M_2'\in\M_2$ such that $M_1\sqsubseteq M_2'$. As $\M_2'$ is an antichain, we must have $M_2=M_2'$, so $M_2\simeq M_1$. Therefore by symmetry, $\M_1$ and $\M_2$ are bijective sets of isomorphic models.
\end{proof}

\begin{theoreme}
\label{th:reduc}
If $U$ and $U'$ are $C_0$-saturated fact-sets, $U\models U'$ iff $U'\subseteq U$.
\end{theoreme}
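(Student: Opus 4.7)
The plan is to split the biconditional and show that the reverse implication is immediate, while the forward implication is exactly where the $C_0$-saturation hypothesis pays off. Throughout, I read $U \models U'$ in the natural conjunctive sense: every model that satisfies every atom of $U$ satisfies every atom of $U'$.

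For the ``$\Leftarrow$'' direction, suppose $U' \subseteq U$. Then any model satisfying all atoms of $U$ in particular satisfies all atoms of $U'$, so $U \models U'$. This uses nothing beyond the conjunctive reading of fact-sets.

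For the ``$\Rightarrow$'' direction, assume $U \models U'$ and fix an arbitrary atom $\beta' \in U'$. Specialising the entailment to this single atom gives $U \models \beta'$. Now I invoke the two saturation hypotheses. First, since $U'$ is itself a $C_0$-saturated fact-set, every atom it contains has $\mea{\beta'} \leq C_0$ (this is exactly the bound enforced by the construction discussed just before the statement). Second, the $C_0$-saturation of $U$ says that whenever a subset of $U$ entails an atom whose attribute-measure is at most $C_0$, that atom already belongs to $U$; I apply it to the subset $U \subseteq U$ and to $\beta'$, which has measure $\leq C_0$ and is entailed by $U$. Hence $\beta' \in U$, and since $\beta'$ was arbitrary in $U'$, we conclude $U' \subseteq U$.

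The only point requiring care is matching the hypothesis of $C_0$-saturation exactly: it is phrased with a subset of $U$ entailing an atom of bounded measure, so one must verify that $\beta' \in U'$ indeed satisfies $\mea{\beta'} \leq C_0$, which is why the saturation of \emph{both} $U$ and $U'$ is needed in the statement. Everything else is a direct unfolding of definitions, and the proof makes no use of the particular shape of the atoms in $\ATOMS$ beyond their semantic interpretation, so it applies uniformly to wrappings, attribute paths, relations, and the $\IN/\INWR/\WR$ predicates alike.
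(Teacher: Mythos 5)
Your proof is correct and follows essentially the same route as the paper's: the reverse direction is the conjunctive reading of fact-sets, and the forward direction applies $C_0$-saturation of $U$ to each $\beta'\in U'$, using the bound $\mea{\beta'}\le C_0$ that the construction imposes on atoms of fact-sets (the paper makes the same parenthetical appeal). No substantive difference.
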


\begin{proof}
By running through the definition of $U\models U'$ (conjunctive sets) we get $U\models U'$ iff $\forall\beta'\in U',~ U\models\beta'$. Yet $U$ is $C_0$-saturated, so $\beta'$ belongs to $U$ (because every atom of $U'$ has less than $C_0$ attribute symbols), hence $U'\subseteq U$. Conversely if $U'\subseteq U$ then every $\beta'\in U'$ is in $U$, hence $U\models U'$.
\end{proof}

\section{The saturation algorithm}
\label{ap:satur}

The algorithm follows the canonical Horn-SAT algorithm, but creates a new fact-set for each consequent $\rho$ of the right part of a clause $C$ and adds deduced atoms in the meantime with $\FuncSty{deduce}$.

\IncMargin{1em}
\begin{algorithm}[H]
\DontPrintSemicolon
\SetKwData{List}{list}
\SetKwFunction{Fail}{fail}\SetKwFunction{Duplicate}{duplicate}\SetKwFunction{Deduce}{deduce}
\KwIn{$\xi$ : a conjunctive set of clauses of the form $C=\Sigma\Rightarrow \Delta$}
% \KwData{$\FuncSty{list}$: a function from the set of clauses to $\{\top,\bot\}$}
\KwOut{$\U$ a finite disjunctive set of finite saturated conjunctive sets of atoms}
\BlankLine
$\U\leftarrow \{\emptyset\}$\;% \{\{k\cdot\top~|~k\in L(\xi)\}\}$\;
\For{$i=1$ \textbf{to} $|\xi|$}{
    \For{$C=\Sigma\Rightarrow \Delta\in\xi$}{
        $\U'\leftarrow\emptyset$\;
        \For{$U\in\U$}{
            \If{$\Sigma\subseteq U$}{
                $\U'\leftarrow\U'\cup\{$\Deduce$(U\cup\rho)~|~\rho\in\Delta\}$\;
            }
        }
        $\U\leftarrow\U'$\
    }
}
\caption{\KwSty{Satur}\label{algo:satur}}
\end{algorithm}\DecMargin{1em}

with $L(\xi)$ the labels present in $\xi$. As we sometimes have to wait for the deduction to bring positive literals that can be matched with left parts of clauses, a simple propagation unit does not work, hence the outermost $\KwSty{for}$ loop. The function $\FuncSty{deduce}$ returns a saturated set with the schemata in \eqref{eq:deduce}. 

\begin{equation}
\label{eq:deduce}
\begin{array}{c}

\begin{array}{rlr}
k\bl p:t\implies& k\bl p: \top&\text{(path intrinsic def. by type)}\\
k\bl p\trieq l\bl q\implies& k\bl p: \top\wedge l\bl q:\top&\text{(path intrinsic def. by equality)}\\
k\bl p:\top\implies& k\bl q:\top& \text{for every prefix $q$ of $p$}\\&& \text{(prefix intrinsic def.)}\\
\rel{r}_m((k_i\bl p_i)_{i<m})\implies& \bigwedge_{i<m} k_i\bl p_i: \top&\text{(path intrinsic def. by relation)}\\
k\bl pp'\trieq l\bl qp'\implies& k\bl p\trieq l\bl q&\text{(prefix equality)}\\
k\bl p\trieq l\bl q\implies& l\bl q\trieq k\bl p&\text{(equality symmetry)}\\
k_1\bl p_1\trieq k_2\bl p_2\wedge k_2\bl p_2\trieq k_3\bl p_3\implies& k_1\bl p_1\trieq k_3\bl p_3&\text{(equality transitivity)}\\
k\bl p\trieq l\bl q\implies& \beta[k\bl p/l\bl q]&\text{if }\mea{\beta[k\bl p/l\bl q]}\leq C_0\text{ (substitution)}\\[1.5em]
\end{array}\\
\begin{array}{rlr}
\IN(k,l)\implies& k\bl\top\wedge l\bl\top& \text{(path intrinsic def. by $\IN$)}\\
\IN(k,l)\wedge k\bl p\trieq k'\bl \varepsilon\implies& \IN(k',l)&\text{(non-escapability 1)}\\
\IN(k,l)\wedge\IN(k',l')\wedge k\bl p\trieq k'\bl p'\implies& l\trieq l'&\text{(non-escapability 2)}\\
\IN(k,l)\wedge\IN(k,l')\implies& l\trieq l'&\text{(wrapping disjointedness)}\\[1.5em]
\INWR(k)\implies& k\bl\top& \text{(path intrinsic def. by $\INWR$)}\\
\WR(k)\implies& k\bl\top& \text{(path intrinsic def. by $\WR$)}\\
\IN(k,l)\implies& \INWR(k)\wedge\WR(l)&\text{($\IN$ consequences)}\\
\INWR(k)\wedge\WR(k)\implies& \bot&\text{(wrapping non-embeddedness)}\\
T\bl\top\implies& \WR(T)&\text{(syntactic type $\WVAR$)}\\
x\bl\top\wedge \WR(x)\implies& \bot&\text{(syntactic type $\NVAR$)}\\
b\bl\top\wedge \WR(b)\implies& \bot&\text{(syntactic type $\NNAME$)}
\end{array}
\end{array}
\end{equation}

Remark that the reachability constraint does not have to be added because it is forced by the syntax. Same for the non-emptiness of wrappings.

\section{The graph solver}

The graph solver (\cite[p.4-6]{KollerThater:05}) processes by computing recursively the arguments of fragments (attribute-reachability maximum subframe, e.g. a \type{quant} node and its \feat{restr} and \feat{nscope}) and attaching them (i.e. merging) at their \type{hole} argument nodes. A fragment is called free if there is no ingoing scope relation in the considered subgraph. They are computed by the \FuncSty{Free-Fragment} function. The Weakly Connected Components (WCCs) can be calculated in linear time. The whole enumeration also only need polynomial time.

\IncMargin{1em}
\begin{algorithm}[H]
\DontPrintSemicolon
\SetKwData{FREE}{free}
\SetKwBlock{Try}{try}{}\SetKwBlock{With}{with}{}
\SetKwFunction{FF}{Free-Fragments}\SetKwFunction{FAIL}{fail}\SetKwFunction{CHOOSE}{choose}
\SetKwFunction{WCCS}{WCCs}\SetKwFunction{GS}{Graph-Solver}
\KwIn{$G$ : a tree-shaped Quantified Complex}
% \KwData{}
\KwOut{$G'$ : a unambiguous correct reading of $G$}
\BlankLine
\FREE$\leftarrow$\FF$(G)$\;
\lIf{\FREE$=\emptyset$}{\FAIL}
\Else{\CHOOSE$(F\in~$\FREE)\;
    $G_1,...G_m\leftarrow$\WCCS$(G\setminus F)$\;
    \ForEach{$G_i\in G_1,...G_m$}{
        $G_i'\leftarrow$\GS$(G_i)$\;
    }
    $G'\leftarrow$ Attach $G_1',...G_m'$ at $F$\;
}
\Return{$G'$}
\caption{\FuncSty{Graph-Solver}$(G)$}\label{algo:solver}
\end{algorithm}\DecMargin{1em}

\end{appendix}
\end{document}